\newcommand{\PSI}{{\sc Partitioned Subgraph Isomorphism}}
\newcommand{\Z}{{\mathbb Z}_{\scriptscriptstyle{\geq 0}}}
\newcommand{\R}{{\mathbb R}}
\newcommand{\ZZ}[1]{\mathbb{Z}_{#1}}
\newcommand{\BB}{{\mathcal B}}
\newcommand{\CC}{{\mathcal C}}
\newcommand{\IP}{(IP)\xspace}
\newcommand{\bw}{{branch-width}\xspace}
\newcommand{\pw}{{path-width}\xspace}
\newcommand{\cmm}{A_{\psi}}
\newcommand{\tvm}{b_{\psi}}
\newtheorem{claim}[theorem]{Claim}
 \newtheorem{proposition}[theorem]{Proposition}
\newcommand{\OO}{{\mathcal{O}}}
\newcommand{\ii}{i}
\newcommand{\ip}{i'}
\newcommand{\q}{q}
\newcommand{\bb}{g}
\newcommand{\defproblem}[3]{
  \vspace{1mm}
\noindent\fbox{
  \begin{minipage}{0.96\textwidth}
  #1 \\
  {\bf{Input:}} #2  \\
  {\bf{Question:}} #3
  \end{minipage}
  }
}
\title{On the Optimality of Pseudo-polynomial Algorithms for Integer Programming}
\author[1]{Fedor V.~Fomin}
\author[1]{Fahad Panolan}
\author[2]{M. S. Ramanujan}
\author[1,3]{Saket Saurabh}
\affil[1]{University of Bergen, Norway. \texttt{\{fomin|fahad.panolan\}@ii.uib.no}}
\affil[2]{University of Warwick. \texttt{R.Maadapuzhi-Sridharan@warwick.ac.uk}}
\affil[3]{Institute of Mathematical Sciences, HBNI, Chennai, India. \texttt{saket@imsc.res.in}}
\begin{document}

\maketitle


\begin{abstract}
In the classic  
{\em Integer Programming} (IP) problem,  the objective is to decide whether, for a given $m \times n$ matrix $A$ and an $m$-vector $b=(b_1,\dots, b_m)$, there is a non-negative integer $n$-vector $x$ such that $Ax=b$. 
Solving (IP) is an important step in numerous algorithms and it is important to obtain an understanding of the precise complexity of this problem as a function of natural parameters of the input.

The classic pseudo-polynomial time algorithm of  Papadimitriou~[J. ACM 1981] for instances of (IP) with a constant number of constraints was only recently improved upon by Eisenbrand and Weismantel~[SODA 2018] and Jansen and Rohwedder~[ArXiv 2018]. We continue this line of work and  show that under  the Exponential Time Hypothesis (ETH), the algorithm of Jansen and Rohwedder is nearly optimal. We also show that when the matrix $A$ is assumed to be non-negative, a component of Papadimitriou's original algorithm is already nearly optimal under ETH.

This motivates us to pick up the line of research initiated by Cunningham and Geelen~[IPCO 2007] who studied the complexity of solving (IP) with non-negative matrices in which the number of constraints may be unbounded, but the  branch-width of the column-matroid corresponding to the constraint matrix is a constant. We prove a lower bound on the complexity of solving (IP) for such instances and obtain optimal results with respect to a closely related parameter, path-width. Specifically, we 
prove {\em matching} upper and lower bounds for   (IP) when the {\em path-width} of the corresponding column-matroid is a constant. 

 \end{abstract}

\section{Introduction}\label{sec:intro}

\newcommand{\cm}{A_{(\psi,c)}}
\newcommand{\tv}{b_{(\psi,c)}}
\newcommand{\valueA}{\Delta}
\newcommand{\valueAB}{d}
\newcommand{\valueB}{\|b\|_{\infty}}

In the classic {\em Integer Programming} problem, the input is an  $m\times n$ integer matrix $A$, and an $m$-vector $b=(b_1, \dots, b_m)$. 
We consider the feasibility version of the problem, where 
the objective is to find a non-negative integer $n$-vector  $x $ (if one exists)  such that  $Ax=b$. Solving this problem, denoted by \IP, is a fundamental step in numerous algorithms and it is important to obtain an understanding of the precise complexity of this problem as a function of natural parameters of the input.

\IP is known to be NP-hard. However,  there are two classic algorithms  due to  Lenstra \cite{ILP:Lenstra} and 
Papadimitriou ~\cite{Papadimitriou81} solving \IP in polynomial or pseudo-polynomial time for two important cases when the number of variables and the number of constraints are bounded. These algorithms  in some sense complement each other. 

The algorithm of Lenstra 
shows that \IP is solvable in polynomial time when the number of variables is bounded. Actually, the result of Lenstra is even stronger:   \IP is 
{\em fixed-parameter tractable} parameterized by the number of variables. 
 However, the running time of Lenstra's algorithm is  
doubly exponential in  $n$. Later,  
Kannan \cite{ILP:Kannan} provided an 
algorithm for  \IP  running in time $n^{\OO(n)}$.  Deciding whether the running time  $n^{\OO(n)}$ can be improved to  $2^{\OO(n)}$  is a long-standing open question. 

Our work is motivated by the complexity analysis of the complementary case when the number of constraints is bounded.  \IP is NP-hard already for $m=1$ (the \textsc{Knapsack} problem) but solvable in pseudo-polynomial time.  
In 1981, Papadimitriou~\cite{Papadimitriou81}   extended this result by showing  that \IP 
is solvable in  pseudo-polynomial time on instances for which the number of constraints $m$  is a constant. 
 The  algorithm of Papadimitriou consists of two steps. The first step is combinatorial, showing that if the entries of $A$ and $b$ are from $\{0,\pm 1,  \dots, \pm \valueAB\},$ and \IP has a solution,  then there is also a solution 
which is in $\{0,1,  \dots,  n (m\valueAB )^{2m+1}\}^n$. 
The second, algorithmic step shows that if \IP has a solution  with the maximum entry at most $B$, then the problem is solvable in time $\OO((nB)^{m+1})$.
Thus the total running time of Papadimitriou's algorithm is 
$\OO(n^{2m+2} \cdot (m\valueAB)^{(m+1)(2m+1)})$, where  $\valueAB$ is an upper bound on the absolute values of the entries of $A$ and $b$. There was no algorithmic progress on this problem until the very recent breakthrough of   Eisenbrand and Weismantel~\cite{EisenbrandW18}. They proved the following result. 

\begin{proposition}[Theorem 2.2, \textbf{Eisenbrand and Weismantel~\cite{EisenbrandW18}}]\label{prop:Eisen}
\IP with    $m\times n$ matrix $A$  is solvable  in time $(m\cdot \valueA)^{\OO(m)}\cdot \valueB^2$, where $\valueA$ is an  upper bound on the absolute values of the entries of $A$.

\end{proposition}

%
%

Then, Jansen and Rohwedder improved Proposition~\ref{prop:Eisen} and gave a matching lower bound very recently~\cite{JansenR18}. 

\begin{proposition}[\textbf{Jansen and Rohwedder~\cite{JansenR18}}]\label{prop:Jansen}
\IP with    $m\times n$ matrix $A$  is solvable  in time 
$\OO(m \valueA)^m \log(\valueA) \log(\valueA+\valueB)$. 
where $\valueA$ is an  upper bound on the absolute values of the entries of $A$. 
Assuming the Strong Exponential Time Hypothesis (SETH), 
there is no algorithm for \IP\ running in time $n^{\OO(1)}\cdot \OO(m(\valueA+\valueB))^{m-\delta}$ for any $\delta>0$. 
\end{proposition}

SETH is the hypothesis that CNF-SAT cannot be solved in time $(2-\epsilon)^nm^{\OO(1)}$ on $n$-variable $m$-clause formulas for any constant $\epsilon$. ETH is the hypothesis that 3-SAT cannot be solved in time $2^{o(n)}$ on $n$-variable formulas. 
Both ETH and SETH were first introduced in the work of Impagliazzo and Paturi~\cite{ImpagliazzoP01}, which built upon earlier work of Impagliazzo, Paturi and Zane~\cite{ImpagliazzoPZ01}. 
 One of the natural question is whether the exponential dependence of $\valueB$ can be improved 
 significantly   
 at the cost of super polynomial dependence on $n$. 
  Our first theorem provides a conditional lower bound indicating that any significant  improvements are unlikely. 


\begin{restatable}{theorem}{ETHIP}\label{thm:ETHIP}
 Unless the Exponential Time Hypothesis (ETH)  fails, 
\IP with $m\times n$  matrix $A$ cannot be solved in time $n^{o(\frac{m}{\log m})} \cdot \valueB^{o(m)}$ 
  even when the 
constraint matrix $A$  is non-negative and each entry in any feasible solution 
is at most $2$. 
\end{restatable}

 Let us note that  since the bound in  Theorem~\ref{thm:ETHIP} holds for a non-negative matrix $A$, we can always reduce (in polynomial time) the original instance of the problem to an equivalent instance   
where the maximum value $\valueA$ in the 
constraint matrix $A$   does not exceed $\valueB$. Thus  Theorem~\ref{thm:ETHIP} also implies  the conditional lower bound
$n^{o(\frac{m}{\log m})} \cdot (\valueA\cdot \valueB)^{o(m)}$. When $m=\OO( n)$, our bound also implies the lower bound
$(n\cdot m)^{o(\frac{m}{\log m})} \cdot (\valueA\cdot \valueB)^{o(m)}$.
 We complement  Theorem~\ref{thm:ETHIP} by 
turning our focus to the dependence of algorithms solving \IP on $m$ alone, and obtaining the following theorem.

\begin{restatable}{theorem}{ETHIPtwo}\label{thm:ETHIP2}
 Unless the Exponential Time Hypothesis (ETH)  fails,  
\IP with $m\times n$  matrix $A$ cannot be solved in time $f(m)\cdot  (n \cdot \valueB)^{o(\frac{m}{\log m})}$ 
 for any computable function $f$. The result holds even when the 
constraint matrix $A$  is non-negative and each entry in any feasible solution 
is at most $1$. 
\end{restatable}

 The difference between our first two theorems is the following. 
Although   Theorem~\ref{thm:ETHIP} provides a better dependence on $\valueB$,   Theorem~\ref{thm:ETHIP2} provides much more information on how the complexity of the problem depends on $m$. 
   Since several parameters are involved in this running time estimation, a natural objective is to study the possible tradeoffs between them. For instance, consider the 
 $\OO(m \valueA)^m \log(\valueA) \log(\valueA+\valueB)$ time 
 algorithm (Proposition~\ref{prop:Jansen}) for \IP. A natural follow up question is the following. 
  Could it be that by allowing a significantly worse  dependence (a superpolynomial dependence) on $n$ and $\valueB$  and an {\em arbitrary} dependence on $m$, one might be able to improve the dependence on $\valueA$ alone?  Theorem~\ref{thm:ETHIP2} provides a strong argument against such an eventuality. 
 Indeed, since the lower bound of 
  Theorem~\ref{thm:ETHIP2} holds even for non-negative matrices, it rules out algorithms with running time $f(m)\cdot  \valueA^{o(\frac{m}{\log m})} \cdot (n \cdot \valueB)^{o(\frac{m}{\log m})}$. Therefore, obtaining a subexponential dependence of 
  $\valueA$ on ${m}$ even at the cost of a superpolynomial dependence of $n$ and $\valueB$ on $m$, and an arbitrarily bad dependence on $m$ is as hard as obtaining a subexponential algorithm for 3-SAT.

We now motivate our remaining results. We refer the reader to Figure~\ref{fig:summary} for a summary of our main results. 
It is straightforward to see that when the matrix $A$ happens to be non-negative, the  algorithm of Papadimitriou~\cite{Papadimitriou81}  runs in time $\OO((n\cdot \valueB)^{m+1})$. 
Due to Theorems~\ref{thm:ETHIP} and ~\ref{thm:ETHIP2}, the  dynamic programming step of the algorithm of Papadimitriou
for \IP{} when the maximum entry in a solution as well as in the constraint matrix  is bounded, is  already close to optimal.  %
%
%
%
Consequently, any quest for  ``faster'' algorithms for \IP must be built around the use of additional structural properties of the matrix $A$.
Cunningham and Geelen~\cite{CunninghamG07} introduced such an approach by considering  the \emph{branch decomposition} of the matrix $A$. They were motivated by the fact that the result of Papadimitriou can be interpreted as a result for matrices of constant \emph{rank} and {\bw} is a parameter which is upper bounded by rank plus one.  
%
 For a matrix $A$, the \emph{column-matroid} of $A$ denotes the matroid whose elements are the columns of $A$ and whose independent sets are precisely the linearly independent sets of columns of $A$.  
We postpone the formal definitions of branch decomposition and \bw  till the next section. For \IP with a \emph{non-negative} matrix $A$,  Cunningham and Geelen \cite{CunninghamG07} showed that when the \bw of the column-matroid of $A$ is constant, \IP is solvable in pseudo-polynomial time. 
\begin{proposition}[\textbf{Cunningham and Geelen \cite{CunninghamG07}}]\label{thmCG}
\IP with { a non-negative}  $m\times n$ matrix $A$ given together with a branch decomposition of its column matroid of width $k$, is solvable  in time $\OO((\valueB+1)^{2k}mn + m^2n)$.
\end{proposition}

We analyze the complexity of \IP parameterized by the \bw of $A$, by making use  of SETH 
 and 
 obtain the following lower bound(s).

\begin{theorem}
\label{thm:lowbranchwidth0}
 Unless SETH fails, \IP with  a non-negative $m\times n$  constraint matrix $A$ cannot be solved in time 
 $f({\sf bw})(\valueB+1)^{(1-\epsilon){\sf bw}}(mn)^{\OO(1)}$ or $f(\valueB)(\valueB+1)^{(1-\epsilon){\sf bw}}(mn)^{\OO(1)}$,  for any computable function $f$. Here $\sf bw$ is the branchwidth of the column matroid of $A$.
\end{theorem}

In recent years, 
 SETH has been used to obtain several tight conditional bounds on the running time of algorithms for various optimization problems on graphs of bounded treewidth 
~\cite{LokshtanovMS11a-tw}. However, in order to be able to use SETH to prove lower bounds for \IP in combination with the \bw of matroids, we have to develop new ideas. 

In fact, Theorem~\ref{thm:lowbranchwidth0} follows from  stronger lower bounds we prove using the \pw of $A$ as our parameter of interest instead of the \bw.
  The parameter \emph{\pw} is closely related to the notion of 
\emph{trellis-width} of a linear code, which is a parameter commonly used in coding theory
 \cite{DBLP:journals/tit/HornK96}.
For a matrix $A\in  \R^{m\times n}$, computing the \pw of the column matroid of $A$ is equivalent to computing  the  trellis-width of the linear code generated by $A$. Roughly speaking,  
the \pw of  the column matroid of 
$A$ is at most $k$, if there is a permutation of  the columns of $A$ such that in the matrix $A'$ obtained from $A$ by applying this column-permutation, for every $1\leq i \leq n-1$, the \emph{dimension} of the subspace of $\R^{m}$ obtained by taking the intersection of the subspace of $\R^{m}$ spanned by the first $i$ columns with the subspace of $\R^{m}$ spanned by the remaining columns, is at most $k-1$. 
%

\begin{figure}[t]
\centering
\scriptsize
{
\begin{tabular}{|c|c|c|c|c|c|}
\hline
Upper Bounds  &  Lower bounds   \\ \hline \hline 
 &   \\ 
 
& no $n^{o(\frac{m}{\log m})} \cdot \valueB^{o(m)}$ time algorithm under ETH (Theorem \ref{thm:ETHIP})  \\  
 & (\emph{even} for non-negative matrix $A$ and solution entries bounded by 2)  \\ 
  
 &   \\  \hline 
 \\
 
 \multirow{2}{*}{
 $(m\cdot \valueA)^{\OO(m)}\cdot \valueB^{\OO(1)}$~\cite{EisenbrandW18,JansenR18}} & no $n^{\OO(1)}\cdot \OO(m(\valueA+\valueB))^{m-\delta}$ time algorithm for $\delta>0$ under SETH~\cite{JansenR18} \vspace{4pt} \\ \cline{2-2} 
& no $(n\cdot m)^{o(\frac{m}{\log m})}  (\valueA\cdot \valueB)^{o(m)}$  algorithm when $m=\OO(n)$  under ETH \vspace{3pt} \\ & (consequence of Theorem \ref{thm:ETHIP})  \\     
 & (\emph{even} for non-negative matrix $A$ and solution entries bounded by 2)  \\ 
  
 &   \\  \hline 
 
%

\\
& no $f(m)\cdot  (n \cdot \valueB)^{o(\frac{m}{\log m})}$ under ETH (Theorem \ref{thm:ETHIP2})  \\  
 & (\emph{even} for non-negative matrix $A$ and solution entries bounded by 1)  \\ 
  
 &   \\  \hline

 &   \\  
$\OO((\valueB +1)^{{\sf pw}+1}mn + m^2n)$   & no $f({\sf pw})(\valueB+1)^{(1-\epsilon){\sf pw}}(mn)^{\OO(1)}$ algorithm under SETH (Theorem \ref{thm:lowbranchwidth}) \\  
 (non-negative matrix $A$)
 &  (\emph{even} for non-negative matrix $A$) \\  (Theorem \ref{thmCGlin})
 &   \\  
  &   no $f(\valueB)(\valueB+1)^{(1-\epsilon){\sf pw}}(mn)^{\OO(1)}$ algorithm under SETH (Theorem \ref{thm:lowentries}) \\ 
  &  (\emph{even} for non-negative matrix $A$)  \\  \\ \hline 
     
     \\
$\OO((\valueB +1)^{2{\sf bw}}mn + m^2n)$  & no $f({\sf bw})(\valueB+1)^{(1-\epsilon){\sf bw}}(mn)^{\OO(1)}$ \\ (non-negative matrix $A$)~\cite{CunninghamG07}& or \\ & $f(\valueB)(\valueB+1)^{(1-\epsilon){\sf bw}}(mn)^{\OO(1)}$ algorithm \\ & under SETH (Theorem~\ref{thm:lowbranchwidth0})  \\  
 & (\emph{even} for non-negative matrix $A$)  \\ 
  
 &   \\  \hline

\end{tabular}
}
\caption{ 
A summary of our lower bound results in comparison with the relevant known upper bound results. Here, $n$ and $m$ are the number of variables and constraints respectively, ${\sf pw}$ and ${\sf bw}$ denote the {\pw} and {\bw} of the column matroid of $A$ and $\valueB$ 
denotes a bound on the   largest absolute value in $b$ while $\valueA$ denotes a bound on the largest absolute value in $A$.
} \label{fig:summary} 
\end{figure}


The value of the parameter {\pw} is always at least the value of {\bw} and thus Theorem~\ref{thm:lowbranchwidth0} follows from the following theorems.

\begin{theorem}
\label{thm:lowbranchwidth}
 Unless SETH fails, \IP with even a non-negative $m\times n$  constraint matrix $A$ cannot be solved in time $f(k)(\valueB+1)^{(1-\epsilon)k}(mn)^{\OO(1)}$ for any computable function 
$f$ and $\epsilon>0$,  where 
  $k$ is the \pw of the column matroid of $A$.
\end{theorem}

\begin{theorem}
\label{thm:lowentries} 
 Unless SETH fails, \IP with even a non-negative $m\times n$  constraint matrix $A$ cannot be solved in time $f(\valueB)(\valueB+1)^{(1-\epsilon)k}(mn)^{\OO(1)}$ for any computable function 
$f$ and $\epsilon>0$,  where 
 $k$ is the \pw of the column matroid of $A$.
\end{theorem}


 
Although the proofs of both lower bounds have a similar {structure}, we believe that there are sufficiently many differences in the proofs to warrant stating and proving them separately.

Note that although there is still a gap between  the  upper bound of Cunningham and Geelen from Proposition~\ref{thmCG}
and the lower bound provided by
Theorem~\ref{thm:lowbranchwidth0}, the lower bounds given in Theorems~\ref{thm:lowentries} and \ref{thm:lowbranchwidth} are asymptotically tight in the following sense. The proof of  
  Cunningham and Geelen in  \cite{CunninghamG07}
    actually implies the upper bound stated in Theorem~\ref{thmCGlin}. We   provide a self-contained proof in this paper for the  reader's convenience. 
  
\begin{theorem}\label{thmCGlin}
\IP with non-negative  $m\times n$ matrix $A$ given together with a path decomposition of its column matroid of width $k$ is solvable  in time $\OO((\valueB+1)^{k+1}mn + m^2n)$.
\end{theorem}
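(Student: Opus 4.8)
The plan is to specialise the branch decomposition dynamic program underlying Theorem~\ref{thmCG} to the linearly ordered structure of a path decomposition, and to use this extra structure to bring the exponent of $d+1$ down from $2k$ to $k+1$. A path decomposition of width $k$ is in particular a caterpillar branch decomposition of width at most $k$, so Theorem~\ref{thmCG} already yields an $\OO((d+1)^{2k}mn+m^2n)$ algorithm; the point is to do better by exploiting that the decomposition is just a single permutation of the columns. Relabel the columns of $A$ as $a_1,\dots,a_n$ in this order, put $S_i=\{a_1,\dots,a_i\}$ and $\bar{S}_i=\{a_{i+1},\dots,a_n\}$, and let $W_i=\mathrm{span}(S_i)\cap\mathrm{span}(\bar{S}_i)$, so that $\dim W_i\le k-1$ by the definition of path-width. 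In an $\OO(m^2n)$ preprocessing phase (Gaussian elimination) we check that $b$ lies in the column space of $A$, rejecting if not, and compute for each $i$ a basis of $\mathrm{span}(\bar{S}_i)$, a linear test for membership in it, and a set $C_i$ of at most $k-1$ coordinates on which the coordinate projection is injective on $W_i$.

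The algorithm sweeps the columns from left to right, maintaining after step $i$ a list $\mathcal{L}_i$ of \emph{partial sums}: the set of all $v\in\Z^m$ such that (a) $v=A_{S_i}x$ for some non-negative integer $x\in\Z^i$ (where $A_{S_i}$ is the submatrix of $A$ on the columns of $S_i$), (b) $0\le v\le b$ coordinatewise, and (c) $b-v\in\mathrm{span}(\bar{S}_i)$. We set $\mathcal{L}_0=\{0\}$ and build $\mathcal{L}_i$ from $\mathcal{L}_{i-1}$ by forming every vector $v+t\,a_i$ with $v\in\mathcal{L}_{i-1}$ and $t\in\{0,1,\dots,d\}$, discarding those that violate (b) or (c), and removing duplicates; the instance is feasible if and only if $b\in\mathcal{L}_n$. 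Restricting $t$ to $\{0,\dots,d\}$ is harmless because $A$ is non-negative: if $a_i$ has a positive entry in some row $j$ then $t\,(a_i)_j\le b_j\le d$ in every solution, and an all-zero column is a loop of the matroid that may be deleted. Pruning is also lossless: non-negativity of $A$ forces every prefix sum of a genuine solution to be coordinatewise at most $b$, hence to have entries in $\{0,\dots,d\}$, so rule (b) never removes a prefix of a solution, and a partial sum violating (c) cannot be completed to a solution, so rule (c) is safe too. (Condition (c) is needed for efficiency, not for correctness.)

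The crux is the bound $|\mathcal{L}_i|\le (d+1)^{\dim W_i}\le (d+1)^{k-1}$. Write $\mathrm{span}(S_i)=W_i\oplus P$ and decompose the column space of $A$ compatibly as $W_i\oplus P\oplus Q$ with $\mathrm{span}(\bar{S}_i)=W_i\oplus Q$. Any $v$ satisfying (a) lies in $\mathrm{span}(S_i)$ and so has no $Q$-component; writing $v=v_W+v_P$ and $b=b_W+b_P+b_Q$ accordingly, condition (c) requires $b-v\in W_i\oplus Q$, which forces $v_P=b_P$. Hence every element of $\mathcal{L}_i$ lies in the single affine subspace $b_P+W_i$ of dimension $\dim W_i$, and two such vectors agree exactly when their $W_i$-components agree. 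The projection onto the coordinates of $C_i$, being injective on $b_P+W_i$, therefore embeds $\mathcal{L}_i\subseteq (b_P+W_i)\cap\{0,\dots,d\}^m$ into $\{0,\dots,d\}^{|C_i|}$, a set of size at most $(d+1)^{k-1}$. This is precisely where the two hypotheses are used: non-negativity confines $\mathcal{L}_i$ to the bounded grid $\{0,\dots,d\}^m$, and the path-width bound confines it to a $(k-1)$-dimensional affine space. I expect this counting argument, together with the verification that pruning by (c) never discards a vector lying on a path to a solution, to be the only genuinely delicate point; the rest is a routine transcription of the branch decomposition dynamic program to a linear layout.

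Finally, the running time. Generating the candidates for $\mathcal{L}_i$ costs at most $|\mathcal{L}_{i-1}|\cdot(d+1)\le (d+1)^{k}$ additions of $m$-vectors, and checking (b), checking (c) via the precomputed linear form, and de-duplicating by sorting on the $C_i$-coordinates cost $\OO(m)$ per candidate up to lower-order factors. Summing over the $n$ layers gives $\OO((d+1)^{k}mn)$, so together with the $\OO(m^2n)$ preprocessing the total running time is $\OO((d+1)^{k}mn+m^2n)$, which is within the claimed $\OO((d+1)^{k+1}mn+m^2n)$ bound.
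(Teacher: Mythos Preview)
Your proposal is correct and follows essentially the same route as the paper: order the columns by the given path decomposition, sweep left to right maintaining the set of partial sums that respect the box $\{0,\dots,d\}^m$ and the relevant subspace condition, and use that this set has size at most $(d+1)^{O(k)}$ so that extending by one column with $d+1$ multiplier choices stays within the budget. The one cosmetic difference is that the paper appends $b$ as an extra column, works with the augmented matrix $A'=[A\,|\,b]$, and invokes the Cunningham--Geelen counting lemma on the linear subspace $S(A',[i])$ (of dimension at most $k$), whereas you keep $A$ and impose the affine condition $b-v\in\operatorname{span}(\bar S_i)$ directly, arguing the $(d+1)^{k-1}$ bound on the resulting affine slice by hand; the two formulations are equivalent, and your version even shaves one factor of $d+1$ from the intermediate bound.
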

Then by  Theorem~\ref{thm:lowbranchwidth},  we cannot relax the $(\valueB+1)^k$ factor in Theorem~\ref{thmCGlin} {even} if we allow in the running time  an arbitrary function depending on $k$,  while Theorem~\ref{thm:lowentries} shows a similar lower bound in terms of $\valueB$ instead of $k$. Put together the results imply that no matter how much one is allowed to compromise on either the path-width or the bound on $\valueB$, 
 it is unlikely that the algorithm of Theorem \ref{thmCGlin} can be improved.

%
%

The \pw of matrix $A$ does not exceed its rank and thus the number of constraints in \IP. Hence, similar to  Proposition~\ref{thmCG}, Theorem~\ref{thmCGlin} generalizes the result of Papadimitriou when restricted to non-negative matrices. 
Also we note that   the assumption of non-negativity is unavoidable (without any further assumptions such as a bounded domain for the variables) in this setting because \IP is NP-hard  when the constraint matrix $A$ is allowed to have  negative values (in fact even when restricted to $\{-1,0,1\}$) and the branchwidth of the column matroid of $A$ is at most 3. A close inspection of the instances they construct in their NP-hardness reduction shows that the column matroids of the resulting constraint matrices are in fact direct sums of circuits, implying that even their \emph{\pw} is bounded by 3.

\medskip
\noindent
{\bf Organization of the paper.}
The rest of the paper is organized as follows. There are two main technical parts to this paper. The first part (Section~\ref{Sec:ETHlowerbounds}) is devoted to proving Theorem~\ref{thm:ETHIP} and Theorem~\ref{thm:ETHIP2} (our ETH based lower bounds) while the second part (Section~\ref{sec:SETHLB}) is devoted to proving Theorem~\ref{thm:lowbranchwidth} and Theorem~\ref{thm:lowentries} (our SETH based lower bounds), and consequently, Theorem~\ref{thm:lowbranchwidth0}. 
For all our reductions, we begin by giving an overview in order to help the reader (especially in the SETH based reductions)  navigate the technical details in the reductions.  
%
We then prove Theorem~\ref{thm:lowentries} in  Section~\ref{sec:lowentries} and  Theorem~\ref{thmCGlin} in Section~\ref{sec:thmCGlin} (completing the results for constant {\pw}).

\section{Preliminaries}\label{sec:prel}

We assume that the reader is familiar with basic definitions 
from linear algebra, matroid theory and graph theory.  

\noindent\textbf{Notations.} 
We use $\Z$ and $\R$ to denote the set of non negative integers and real numbers, respectively. For any positive 
integer $n$, we use $[n]$ and $\ZZ{n}$ to denotes the sets 
$\{1,\ldots,n\}$ and $\{0,1,\ldots,n-1\}$, respectively. 
For convenience, we say that $[0]=\emptyset$. 
For any two vectors $b, b'\in \R^m$ and $i\in [m]$, we use $b[i]$ to denote 
the $i^{th}$ coordinate of $b$ and we write $b'\leq b$, if $b'[i]\leq b[i]$ for all $i\in [m]$. 
We often use $0$ to denote the zero-vector whose length will be clear from the context.  
For a matrix $A\in  \R^{m\times n}$, $I\subseteq [m]$ and $J\subseteq [n]$, 
$A[I,J]$ denote the submatrix of $A$ obtained by the restriction of $A$ to the 
rows indexed by $I$ and columns indexed by $J$.   
For an $m\times n$ matrix $A$ and $n$-vector $v$, 
we can write $Av=\sum_{i=1}^{n} A_i v[i]$, 
where $A_i$ is the $i^{th}$ column of $A$. Here we say that 
$v[i]$ is a multiplier of column $A_i$.  For convenience, in this paper, we consider $0$ as an even number. 

\medskip\noindent\textbf{Branch-width of matroids.}  The notion of the branch-width of  graphs, and 
implicitly of matroids, was introduced by Robertson and Seymour in \cite{RobertsonS91}. 
Let  ${M} = (U, {\cal F})$  be a matroid with universe  set $U$  and family ${\cal F}$ of independent sets over $U$.  
We use $r_M$ to denote the rank function of $M$. 
That is, for any $S\subseteq U$, $r_M(S)=\max_{S'\subseteq S, S'\in {\cal F}} \vert S'\vert$.   
 For $X\subseteq U$, the \emph{connectivity function} 
  of $M$ is defined as   \[\lambda_M(X)=r_M(X) +r_M(U\setminus X) - r_M(U) +1\]

For matrix $A\in  \R^{m\times n}$, we use $M(A)$ to denote the column-matroid of $A$.  In this case the connectivity function $\lambda_{M(A)}$ has the following interpretation. For $E=\{1,\dots, n\}$ and $X\subseteq E$, we define 
\[S(A,X)=\operatorname{span}(A|X)\cap\operatorname{span}(A|E\setminus X), 
\]
where $A|X$ is the set of columns of $A$ restricted to $X$ and $\operatorname{span}(A|X)$ is the   subspace of $\R^{m}$ 
spanned by the columns $A|X$. It is easy to see that the dimension of $S(A,X)$ is equal to 
$\lambda_{M(A)}(X)-1$.

A tree is \emph{cubic} if its internal vertices all have degree $3$. 
A \emph{branch decomposition} of 
matroid ${M}$ with universe  set $U$ is  a cubic  tree $T$ and mapping $\mu$ which maps elements of 
$U$ to leaves of $T$. Let  $e$ be an edge of $T$. Then   the forest $T-e$ consists of two connected 
components $T_1$ and $T_2$. Thus every edge $e$ of $T$ corresponds to the partitioning of $U$ into two sets
 $X_e$ and $U\setminus X_e$ such that 
$\mu(X_e)$ are the leaves of $T_1$ and  $\mu(U\setminus X_e)$ are the leaves of $T_2$. The \emph{width} of edge $e$
is $\lambda_{M}(X_e)$ and the width of branch decomposition  $(T, \mu)$ is the maximum edge width, where maximum is taken over all edges of $T$. Finally, the \emph{branch-width} of $M$ is the minimum width taken over all possible branch decompositions of $M$.  

 
 The \emph{\pw} of a matroid is defined as follows. Recall that a \emph{caterpillar} is a tree which is obtained from a path by attaching leaves to some vertices of the path. Then the \pw of a matroid is   the minimum width of a branch decomposition $(T,\mu)$, where $T$ is a cubic caterpillar.  Let us note that every mapping of elements of a matroid to the leaves of a cubic caterpillar corresponds to an  ordering of these elements. 
 Jeong, Kim, and  Oum \cite{Jeong0O16} gave a constructive fixed-parameter tractable algorithm to construct a path decomposition of width at most $k$ for a column matroid of a given matrix.
 
 \medskip\noindent\textbf{ETH and SETH.}
 For $q\geq 3$, let $\delta_q$ be the infimum of the set of constants $c$ for which there exists an algorithm solving $q$-SAT with $n$ variables and $m$ clauses in time $2^{cn}\cdot m^{\OO(1)}$.  
The {\em{Exponential-Time Hypothesis} (ETH)} and {\em{Strong Exponential-Time Hypothesis} (SETH)} are then formally defined as follows.
ETH conjectures  that $\delta_3>0$ and  SETH that $\lim_{q\to \infty}\delta_q=1$.



\section{ ETH lower bounds on pseudopolynomial solvability of \IP} \label{Sec:ETHlowerbounds}

In this section we prove  Theorems~\ref{thm:ETHIP} and~\ref{thm:ETHIP2}. 

\subsection{Proof of Theorem~\ref{thm:ETHIP}}\label{sec:lowrank}
This subsection is devoted to the proof of Theorem~\ref{thm:ETHIP}


\ETHIP*

%

Our proof is  by  a reduction from {\sc $3$-CNF SAT} to \IP. 
There are exactly 2 variables in the \IP\ instance for each variable (one for each literal) and clause. For each clause we define two constraints. For each variable in the 3-CNF formula, we have a constraint, which is a selection gadget. 
%

\begin{figure}
\centering

\begin{tabular}{lllllllllllllll}
                        & ${x_1}$                                                     & ${\bar x_1}$                                                  & ${x_2}$                                                     & ${\bar x_2}$                                                  & ${x_3}$                                                     & ${\bar x_3}$                                                 & ${x_4}$                                                     & ${\bar x_4}$                                                 & ${Y_1}$                                                     & ${Z_1}$                                                   & ${Y_2}$                                                     & ${Z_2}$                                                    & ${Y_3}$                                                    & ${Z_3}$                                                    \\ \cline{2-15}

\multicolumn{1}{l|}{$C_1$} & \multicolumn{1}{l|}{{\color[HTML]{000000} \textbf{1}}} & \multicolumn{1}{l|}{{\color[HTML]{000000} \textbf{0}}} & \multicolumn{1}{l|}{{\color[HTML]{000000} \textbf{1}}} & \multicolumn{1}{l|}{{\color[HTML]{000000} \textbf{0}}} & \multicolumn{1}{l|}{{\color[HTML]{000000} \textbf{1}}} & \multicolumn{1}{l|}{{\color[HTML]{000000} \textbf{0}}} & \multicolumn{1}{l|}{{\color[HTML]{000000} \textbf{0}}} & \multicolumn{1}{l|}{{\color[HTML]{000000} \textbf{0}}} & \multicolumn{1}{l|}{{\color[HTML]{000000} \textbf{1}}} & \multicolumn{1}{l|}{{\color[HTML]{000000} }}           & \multicolumn{1}{l|}{{\color[HTML]{000000} }}           & \multicolumn{1}{l|}{{\color[HTML]{333333} }}           & \multicolumn{1}{l|}{{\color[HTML]{333333} }}           & \multicolumn{1}{l|}{{\color[HTML]{333333} }}           \\ \cline{2-15} 

\multicolumn{1}{l|}{}   & \multicolumn{1}{l|}{{\color[HTML]{000000} \textbf{}}}  & \multicolumn{1}{l|}{{\color[HTML]{000000} \textbf{}}}  & \multicolumn{1}{l|}{{\color[HTML]{000000} \textbf{}}}  & \multicolumn{1}{l|}{{\color[HTML]{000000} \textbf{}}}  & \multicolumn{1}{l|}{{\color[HTML]{000000} \textbf{}}}  & \multicolumn{1}{l|}{{\color[HTML]{000000} \textbf{}}}  & \multicolumn{1}{l|}{{\color[HTML]{000000} \textbf{}}}  & \multicolumn{1}{l|}{{\color[HTML]{000000} \textbf{}}}  & \multicolumn{1}{l|}{{\color[HTML]{000000} \textbf{1}}} & \multicolumn{1}{l|}{{\color[HTML]{000000} \textbf{1}}} & \multicolumn{1}{l|}{{\color[HTML]{000000} \textbf{}}}  & \multicolumn{1}{l|}{{\color[HTML]{333333} \textbf{}}}  & \multicolumn{1}{l|}{{\color[HTML]{333333} \textbf{}}}  & \multicolumn{1}{l|}{{\color[HTML]{333333} \textbf{}}}  \\ \cline{2-15}

\multicolumn{1}{l|}{$C_2$} & \multicolumn{1}{l|}{{\color[HTML]{000000} \textbf{0}}} & \multicolumn{1}{l|}{{\color[HTML]{000000} \textbf{1}}} & \multicolumn{1}{l|}{{\color[HTML]{000000} \textbf{0}}} & \multicolumn{1}{l|}{{\color[HTML]{000000} \textbf{1}}} & \multicolumn{1}{l|}{{\color[HTML]{000000} \textbf{1}}} & \multicolumn{1}{l|}{{\color[HTML]{000000} \textbf{0}}} & \multicolumn{1}{l|}{{\color[HTML]{000000} \textbf{0}}} & \multicolumn{1}{l|}{{\color[HTML]{000000} \textbf{0}}} & \multicolumn{1}{l|}{{\color[HTML]{000000} }}           & \multicolumn{1}{l|}{{\color[HTML]{000000} }}           & \multicolumn{1}{l|}{{\color[HTML]{000000} \textbf{1}}} & \multicolumn{1}{l|}{{\color[HTML]{333333} \textbf{}}}  & \multicolumn{1}{l|}{{\color[HTML]{333333} \textbf{}}}  & \multicolumn{1}{l|}{{\color[HTML]{333333} \textbf{}}}  \\ \cline{2-15} 

\multicolumn{1}{l|}{}   & \multicolumn{1}{l|}{{\color[HTML]{000000} \textbf{}}}  & \multicolumn{1}{l|}{{\color[HTML]{000000} \textbf{}}}  & \multicolumn{1}{l|}{{\color[HTML]{000000} \textbf{}}}  & \multicolumn{1}{l|}{{\color[HTML]{000000} \textbf{}}}  & \multicolumn{1}{l|}{{\color[HTML]{000000} \textbf{}}}  & \multicolumn{1}{l|}{{\color[HTML]{000000} \textbf{}}}  & \multicolumn{1}{l|}{{\color[HTML]{000000} \textbf{}}}  & \multicolumn{1}{l|}{{\color[HTML]{000000} \textbf{}}}  & \multicolumn{1}{l|}{{\color[HTML]{000000} \textbf{}}}  & \multicolumn{1}{l|}{{\color[HTML]{000000} \textbf{}}}  & \multicolumn{1}{l|}{{\color[HTML]{000000} \textbf{1}}} & \multicolumn{1}{l|}{{\color[HTML]{333333} \textbf{1}}} & \multicolumn{1}{l|}{{\color[HTML]{333333} \textbf{}}}  & \multicolumn{1}{l|}{{\color[HTML]{333333} \textbf{}}}  \\ \cline{2-15}

\multicolumn{1}{l|}{$C_3$}   & \multicolumn{1}{l|}{{\color[HTML]{000000} \textbf{0}}} & \multicolumn{1}{l|}{{\color[HTML]{000000} \textbf{0}}} & \multicolumn{1}{l|}{{\color[HTML]{000000} \textbf{0}}} & \multicolumn{1}{l|}{{\color[HTML]{000000} \textbf{1}}} & \multicolumn{1}{l|}{{\color[HTML]{000000} \textbf{0}}} & \multicolumn{1}{l|}{{\color[HTML]{000000} \textbf{1}}} & \multicolumn{1}{l|}{{\color[HTML]{000000} \textbf{0}}} & \multicolumn{1}{l|}{{\color[HTML]{000000} \textbf{1}}} & \multicolumn{1}{l|}{{\color[HTML]{000000} }}           & \multicolumn{1}{l|}{{\color[HTML]{000000} }}           & \multicolumn{1}{l|}{{\color[HTML]{000000} \textbf{}}}  & \multicolumn{1}{l|}{{\color[HTML]{333333} \textbf{}}}  & \multicolumn{1}{l|}{{\color[HTML]{333333} \textbf{1}}} & \multicolumn{1}{l|}{{\color[HTML]{333333} \textbf{}}}  \\ \cline{2-15}

\multicolumn{1}{l|}{}   & \multicolumn{1}{l|}{{\color[HTML]{000000} \textbf{}}}  & \multicolumn{1}{l|}{{\color[HTML]{000000} \textbf{}}}  & \multicolumn{1}{l|}{{\color[HTML]{000000} \textbf{}}}  & \multicolumn{1}{l|}{{\color[HTML]{000000} \textbf{}}}  & \multicolumn{1}{l|}{{\color[HTML]{000000} \textbf{}}}  & \multicolumn{1}{l|}{{\color[HTML]{000000} \textbf{}}}  & \multicolumn{1}{l|}{{\color[HTML]{000000} \textbf{}}}  & \multicolumn{1}{l|}{{\color[HTML]{000000} \textbf{}}}  & \multicolumn{1}{l|}{{\color[HTML]{000000} \textbf{}}}  & \multicolumn{1}{l|}{{\color[HTML]{000000} \textbf{}}}  & \multicolumn{1}{l|}{{\color[HTML]{000000} \textbf{}}}  & \multicolumn{1}{l|}{{\color[HTML]{333333} \textbf{}}}  & \multicolumn{1}{l|}{{\color[HTML]{333333} \textbf{1}}} & \multicolumn{1}{l|}{{\color[HTML]{333333} \textbf{1}}} \\ \cline{2-15}

\multicolumn{1}{l|}{$x_1$}   & \multicolumn{1}{l|}{{\color[HTML]{000000} \textbf{1}}} & \multicolumn{1}{l|}{{\color[HTML]{000000} \textbf{1}}} & \multicolumn{1}{l|}{{\color[HTML]{000000} \textbf{}}}  & \multicolumn{1}{l|}{{\color[HTML]{000000} \textbf{}}}  & \multicolumn{1}{l|}{{\color[HTML]{000000} \textbf{}}}  & \multicolumn{1}{l|}{{\color[HTML]{000000} \textbf{}}}  & \multicolumn{1}{l|}{{\color[HTML]{000000} \textbf{}}}  & \multicolumn{1}{l|}{{\color[HTML]{000000} \textbf{}}}  & \multicolumn{1}{l|}{{\color[HTML]{000000} \textbf{}}}  & \multicolumn{1}{l|}{{\color[HTML]{000000} \textbf{}}}  & \multicolumn{1}{l|}{{\color[HTML]{000000} \textbf{}}}  & \multicolumn{1}{l|}{{\color[HTML]{333333} \textbf{}}}  & \multicolumn{1}{l|}{{\color[HTML]{333333} \textbf{}}}  & \multicolumn{1}{l|}{{\color[HTML]{333333} \textbf{}}}  \\ \cline{2-15} 
\multicolumn{1}{l|}{$x_2$}   & \multicolumn{1}{l|}{{\color[HTML]{000000} \textbf{}}}  & \multicolumn{1}{l|}{{\color[HTML]{000000} \textbf{}}}  & \multicolumn{1}{l|}{{\color[HTML]{000000} \textbf{1}}} & \multicolumn{1}{l|}{{\color[HTML]{000000} \textbf{1}}}  & \multicolumn{1}{l|}{{\color[HTML]{000000} \textbf{}}}  & \multicolumn{1}{l|}{{\color[HTML]{000000} \textbf{}}}  & \multicolumn{1}{l|}{{\color[HTML]{000000} \textbf{}}}  & \multicolumn{1}{l|}{{\color[HTML]{000000} \textbf{}}}  & \multicolumn{1}{l|}{{\color[HTML]{000000} \textbf{}}}  & \multicolumn{1}{l|}{{\color[HTML]{000000} \textbf{}}}  & \multicolumn{1}{l|}{{\color[HTML]{000000} \textbf{}}}  & \multicolumn{1}{l|}{{\color[HTML]{333333} \textbf{}}}  & \multicolumn{1}{l|}{{\color[HTML]{333333} \textbf{}}}  & \multicolumn{1}{l|}{{\color[HTML]{333333} \textbf{}}}  \\ \cline{2-15} 
\multicolumn{1}{l|}{$x_3$}   & \multicolumn{1}{l|}{{\color[HTML]{000000} \textbf{}}}  & \multicolumn{1}{l|}{{\color[HTML]{000000} \textbf{}}}  & \multicolumn{1}{l|}{{\color[HTML]{000000} \textbf{}}}  & \multicolumn{1}{l|}{{\color[HTML]{000000} \textbf{}}}  & \multicolumn{1}{l|}{{\color[HTML]{000000} \textbf{1}}} & \multicolumn{1}{l|}{{\color[HTML]{000000} \textbf{1}}} & \multicolumn{1}{l|}{{\color[HTML]{000000} \textbf{}}}  & \multicolumn{1}{l|}{{\color[HTML]{000000} \textbf{}}}  & \multicolumn{1}{l|}{{\color[HTML]{000000} \textbf{}}}  & \multicolumn{1}{l|}{{\color[HTML]{000000} \textbf{}}}  & \multicolumn{1}{l|}{{\color[HTML]{000000} \textbf{}}}  & \multicolumn{1}{l|}{{\color[HTML]{333333} \textbf{}}}  & \multicolumn{1}{l|}{{\color[HTML]{333333} \textbf{}}}  & \multicolumn{1}{l|}{{\color[HTML]{333333} \textbf{}}}  \\ \cline{2-15} 
\multicolumn{1}{l|}{$x_4$}   & \multicolumn{1}{l|}{{\color[HTML]{000000} \textbf{}}}  & \multicolumn{1}{l|}{{\color[HTML]{000000} \textbf{}}}  & \multicolumn{1}{l|}{{\color[HTML]{000000} \textbf{}}}  & \multicolumn{1}{l|}{{\color[HTML]{000000} \textbf{}}}  & \multicolumn{1}{l|}{{\color[HTML]{000000} \textbf{}}}  & \multicolumn{1}{l|}{{\color[HTML]{000000} \textbf{}}}  & \multicolumn{1}{l|}{{\color[HTML]{000000} \textbf{1}}} & \multicolumn{1}{l|}{{\color[HTML]{000000} \textbf{1}}} & \multicolumn{1}{l|}{{\color[HTML]{000000} \textbf{}}}  & \multicolumn{1}{l|}{{\color[HTML]{000000} \textbf{}}}  & \multicolumn{1}{l|}{{\color[HTML]{000000} \textbf{}}}  & \multicolumn{1}{l|}{{\color[HTML]{333333} \textbf{}}}  & \multicolumn{1}{l|}{{\color[HTML]{333333} \textbf{}}}  & \multicolumn{1}{l|}{{\color[HTML]{333333} \textbf{}}}  \\ \cline{2-15} 
\end{tabular}

\caption{An illustration of the matrix $A_\psi$ corresponding to the 3-CNF formula $\psi=(x_1\vee x_2\vee x_3) \wedge (\bar x_1 \vee \bar x_2 \vee x_3) \wedge (\bar x_4 \vee \bar x_2 \vee \bar x_3)$. The unfilled cells have 0 as the entry. 
}
\label{fig:Apsi-illustration}
\end{figure}

%

We now proceed to the formal description of the reduction. 
From a {\sc $3$-CNF} formula $\psi$ on $n$ variables and $m$ clauses 
we create an equivalent  \IP{} instance $\cmm x=\tvm,x\geq 0$, where 
$\cmm$ is a non-negative integer  $(2m+n)\times 2 (m+n)$ matrix   and the largest 
entry in $\tvm$ is $3$.   Our reduction can be easily seen to be a  polynomial time reduction and we do not give an explicit analysis.  
Let $\psi$ be the input of {\sc $3$-CNF SAT}. Let $X=\{x_1,\ldots,x_n\}$ be the 
set of variables in $\psi$ and $\CC=\{C_1,\ldots,C_m\}$ be the set of clauses in 
$\psi$. 
First we define the set of variables in the in the \IP\ instance. For each $x_i\in X$, we have two variables 
$x_i$ and $\overline{x}_i$ in the \IP{} instance $\cmm x=\tvm,x\geq 0$. 
For each $C_i\in \CC$, we have two variables $Y_i$ and  $Z_i$. 

Now we define the set of constraints of $\cmm x=\tvm,x\geq 0$.
For each $C_i=x\vee y \vee z$, we define two constraints  
\begin{eqnarray}
x+y+z+Y_i&=&3 \qquad \mbox{and} \label{eqn:ci1} \\
Y_i+Z_i&=&2.\label{eqn:ci2}
\end{eqnarray}
\begin{eqnarray}
\mbox{For each $i\in [n]$,} \qquad\qquad x_i+\overline{x}_i=1 \label{eqn:xi} 
\end{eqnarray}
This completes the construction of \IP{} instance $\cmm x=\tvm,x\geq 0$. See Figure~\ref{fig:Apsi-illustration} for an illustration. 
%
%
%
%
%
%
%
We now argue that this reduction correctly maps satisfiable 3-CNF formulas to feasible instances of \IP and vice versa.

\begin{lemma} 
\label{lemma:rank_correctness}
The formula $\psi$ is satisfiable if and only if $\cmm x=\tvm,x\geq 0$ is feasible.
\end{lemma}
\begin{proof}
Suppose that the formula $\psi$ is satisfiable and let $\phi$ be a satisfying assignment of $\psi$. We set values for 
the variables $\{x_i,\overline{x}_i \colon i\in [n]\}\cup \{Y_i,Z_i \colon i\in [m]\}$ 
and prove that $\cmm x=\tvm$. For any $i\in [n]$, if $\phi(x_i)=1$ we set $x_i=1$ and $\overline{x}_i=0$. Otherwise, we set $x_i=0$ and $\overline{x}_i=1$. 
 
For every $i\in [m]$, we define
\begin{equation}
Y_i = 
\left\{ \begin{array}{ll}
0 & \mbox{if the number of literals set to $1$ in $C_{i}$ by $\phi$ is $3$,}  \\
1 & \mbox{if the number of literals set to $1$ in $C_{i}$ by $\phi$ is $2$,} \\
2 & \mbox{otherwise,}
\end{array}\right. \label{eqn:x:bottom:odd}
\end{equation}
and
\begin{equation}
Z_i = 
\left\{ \begin{array}{ll}
2 & \mbox{if the number of literals set to $1$ in $C_{i}$ by $\phi$ is $3$,}  \\
1 & \mbox{if the number of literals set to $1$ in $C_{i}$ by $\phi$ is $2$,} \\
0 & \mbox{otherwise.}
\end{array}\right. \label{eqn:x:bottom:even}
\end{equation}
We now proceed to prove that the above substitution of values to the variables is indeed a feasible solution. 
Towards this, we need to show that \eqref{eqn:ci1}, \eqref{eqn:ci2}, and \eqref{eqn:xi} are satisfied.  First 
consider \eqref{eqn:ci1}. Let $C_i=x\vee y \vee z$. Since $\phi$ is a satisfying assignment, we have that $1\leq x+y+z \leq 3$. Thus, 
by \eqref{eqn:x:bottom:odd}, we conclude that $x+y+z+Y_i=3$. Because of \eqref{eqn:x:bottom:odd} and \eqref{eqn:x:bottom:even},  \eqref{eqn:ci2} is satisfied. Since the values for $\{x_i,\overline{x}_i\colon i\in [n]\}$ is derived from an assignment $\phi$, \eqref{eqn:xi} 
is satisfied.

For the converse direction of the statement of the lemma, suppose that there exists non-negative values for 
the set of variables $\{x_i,\overline{x}_i \colon i\in [n]\}\cup \{Y_i,Z_i \colon i\in [m]\}$, such that  \eqref{eqn:ci1}, \eqref{eqn:ci2}, and \eqref{eqn:xi} are satisfied.
Now we need to show that 
$\psi$ is satisfiable. Because of \eqref{eqn:xi}, we know that exactly one of $x_i$ and $\overline{x}_i$ is set to one and other is set to zero. 
%
%
Next, we define an assignment $\phi$ 
and prove that $\phi$ is a satisfying assignment for $\psi$.   For  $i\in [n]$ we define
\[
\phi(x_i) = 
\left\{ \begin{array}{ll}
1 & \mbox{if } x_i=1, \\
0 & \mbox{if } \overline{x}_i=1.
\end{array}\right.
\]

We claim that $\phi$ satisfies all the clauses. Consider a clause $C_j=x\vee y \vee z$ where $j\in [m]$. 
Since $Y_j+Z_j=2$ (by \eqref{eqn:ci2}), we have that $Y_i\in \{0,1,2\}$. 
Since $Y_i\in \{0,1,2\}$, by \eqref{eqn:ci1}, at least one of $x,y$ or $z$ is set to one. This implies that 
$\phi$ satisfies $C_j$.  
%
This completes the proof of   the lemma. 
\end{proof} 

By  \eqref{eqn:ci2} and \eqref{eqn:xi}, 
we have that the value set for any variable in a feasible solution is at most $2$. 
The following lemma completes the proof of the theorem. 

\begin{lemma}
If there is an algorithm for 
 \IP{} running  in time $n^{o(\frac{m}{\log m})} \valueB^{o(m)}$,  then 
ETH fails.
\end{lemma}
\begin{proof}
By the Sparsification Lemma~\cite{ImpagliazzoPZ01}, we know that {\sc $3$-CNF SAT} on $n'$ variables and $cn'$ clauses, where $c$ is a constant,  
cannot be solved in time $2^{o(n')}$ time. 
Suppose there is an algorithm {\sf ALG} for  
\IP{} running in time $n^{o(\frac{m}{\log m})} \valueB^{o(m)}$. Then for a $3$-CNF formula $\psi$ with $n'$ 
variables and $m'=cn$ clauses we create an instance $\cmm x=\tvm,$ $x\geq 0$ of \IP{} 
as discussed in this section, in polynomial time, where $\cmm$ is a matrix of dimension 
$(2cn'+n')\times (2(n'+cn'))$ and the largest entry in $\tvm$ is $3$. 
 Then by Lemma~\ref{lemma:rank_correctness}, we can  run  {\sf ALG} to  test whether $\psi$ is satisfiable or not. 
This takes time $$(2(cn'+n'))^{o(\frac{2cn'+n')}{\log (2cn'+n')})} \cdot 3^{o(2cn'+n')}=2^{o(n')},$$ hence refuting ETH. 
\end{proof}

\subsection{Proof of Theorem~\ref{thm:ETHIP2}}\label{sec:lowrank2}
In this section we prove the following theorem. 
 
\ETHIPtwo*


\medskip
Towards proving Theorem~\ref{thm:ETHIP2} we use the ETH based lower bound result of Marx~\cite{marx-toc-treewidth} for {\PSI}. For two graphs $G$ and $H$, a map $\phi\colon V(G)\mapsto V(H)$ is called 
a {\em subgraph isomorphism} from $G$ to $H$, if $\phi$ is injective and for any $\{u,v\}\in E(G)$, $\{\phi(u),\phi(v)\}\in E(H)$ (see Figure~\ref{fig:PSI_example} for an illustration).

\begin{figure}[t]
\begin{center}
  \includegraphics[height=175 pt, width=280 pt]{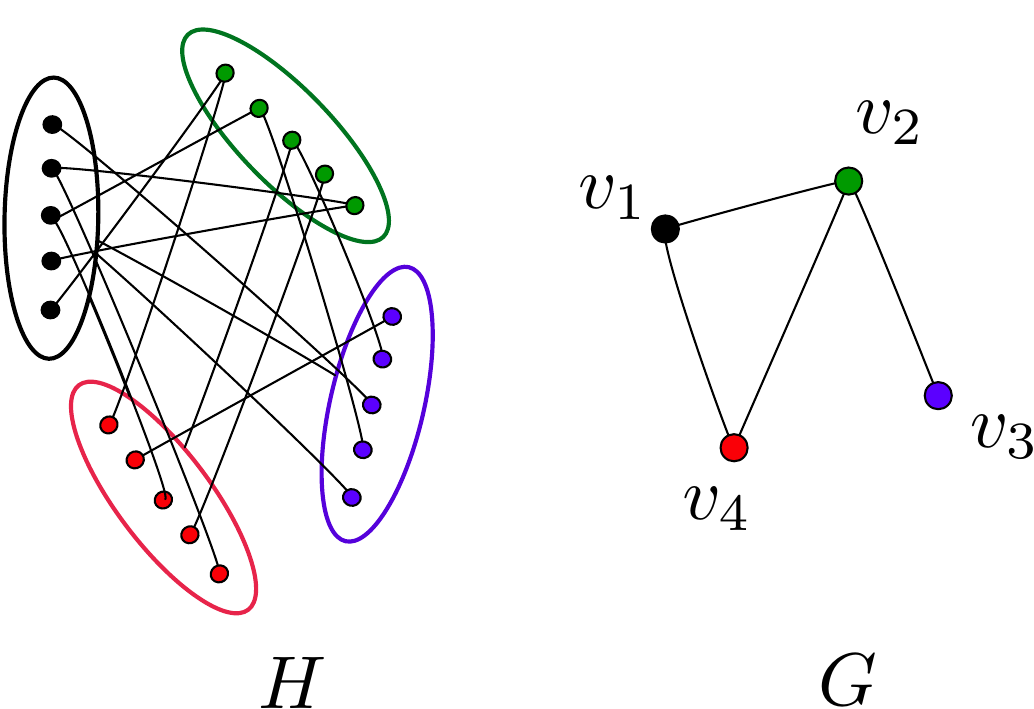}

  \caption{An illustration of an instance of {\PSI}.}  
  \label{fig:PSI_example}
  \end{center}
\end{figure}

\defproblem{\PSI}{Two graphs $G,H$, a bijection $c_G\colon V(G)\mapsto [\ell]$ and a function $c_H\colon V(H)\mapsto [\ell]$, where $\ell=\vert V(G) \vert $.}
{Is there a subgraph isomorphism $\phi$ from $G$ to $H$ such that for any $v\in V(G)$, $c_G(v)=c_H(\phi(v))$?}

\begin{lemma}[Corollary 6.3~\cite{marx-toc-treewidth}]
\label{lem:psimarx}
If \PSI\ can be solved in time $f(G)n^{o(\frac{k}{\log k})}$, where $f$ is
an arbitrary function, $n=\vert V(H)\vert$ and $k$ is the number of edges of the smaller graph $G$, then ETH fails.
\end{lemma}

To prove Theorem~\ref{thm:ETHIP2} we give a polynomial time reduction from \PSI\ to \IP such that for every instance $(G,H,c_G,c_H)$ of \PSI\, the reduction outputs an instance of \IP where the constraint matrix has dimension $\OO(\vert E(G)\vert)\times\OO(\vert E(H)\vert)$ and the largest value in the target vector is $\max \{ \vert E(H)\vert, \vert V(H)\vert\}$.


Let $(G,H,c_G,c_H)$ be an instance of \PSI. Let $k=\vert E(G)\vert$ and $n=\vert V(H)\vert$. We construct an instance $Ax=b$ of \IP from $(G,H,c_G,c_H)$ in polynomial time.  Without loss of 
generality we assume that $[n]=V(H)$ and that there are no isolated vertices in $G$. Hence, the number of vertices in $G$ is at most $2k$. Let $m=\vert E(H)\vert$. For each $e\in E(H)$ we assign a unique integer 
from $[m]$. Let $\alpha\colon E(H)\mapsto [m]$ be the bijection which represents the assignment mentioned above.    
For any $i,j\in [\ell]$, we use $E_H(i,j)$ as a shorthand for the set of edges of $H$ between 
$c_H^{-1}(i)$ and $c_H^{-1}(j)$. Finally, for ease of presentation we let  $\{v_1,\ldots,v_{\ell}\}=V(G)$ and $c_G(v_i)=i$ for all $i\in [\ell]$, where $\ell=\vert V(G)\vert$. 

For illustrative purposes, before proceeding to the formal construction, 
we give an informal description of the \IP\ instance we obtain from a specific  instance of {\PSI}. Let $H$ and $G$ be the graphs in Figure~\ref{fig:PSI_example} and consider the graph $\widehat{H}$ obtained from $H$ as depicted in Figure~\ref{fig:PSI_example_aux_graph}.

For every color $i\in [\ell]$ we have a column in $\widehat{H}$ and for every pair of distinct colors $i,j\in [\ell]$ such that $\{v_i,v_j\}\in E(G)$, we have a copy of $c_H^{-1}(i)$ in Column $i$ and Row $i$ and a copy of $c_H^{-1}(i)$ in Column $i$ and Row $j$. Thus, Column $i$  comprises at most $\ell$ copies of the vertices of $H$ whose image under $c_H$ is $i$ and Row $i$ comprises a copy of $c_H^{-1}(i)$ and additionally, a copy of every vertex $u$ of $H$ such that $v_{c_H{u}}$ is adjacent to $v_i$ in $G$. That is, the color of $u$ is ``adjacent'' to the color $i$ in $G$.

For a vertex $u\in V(H)$, we refer to the unique copy of $u$ in the $i^{th}$ row as the $i^{th}$ copy of $u$ in $\widehat H$.
For every edge $e=\{a,b\}\in E(H)$ where $c_H(a)=i$, $c_H(b)=j$, and $\{v_i,v_j\}\in E(G)$, we have two copies of $e$ in $\widehat H$. The first copy of $e$ has as its endpoints, the $i^{th}$ copy of $a$ and the $i^{th}$ copy of $b$ and the second copy of $e$ has as its endpoints, the $j^{th}$ copy of $a$ and the $j^{th}$ copy of $b$. We now rephrase the {\PSI} problem (informally) as a problem of finding a certain type of subgraph in $\widehat{H}$, which in turn will point us in the direction of our {\IP} instance in a natural way. The rephrased problem statement is the following. Given $G$, $H$,$c_H$,$c_G$ and the resulting auxiliary graph $\widehat H$, find a set of $2|E(H)|$ edges in $\widehat H$ such that the following properties hold. 

\begin{itemize}
	\item (Selection) For every $\{v_i,v_j\}\in E(G)$, we pick a unique edge in $\widehat H$ with one endpoint in (Row $i$, Column $i$) and the other endpoint in (Row $i$, Column $j$) and we pick a unique edge with one endpoint in (Row $j$, Column $j$) and the other endpoint in (Row $j$, Column $i$).  
	\item (Consistency 1) All the edges we pick from Row $i$ of $\widehat H$ share a common endpoint at the  position  (Row $i$, Column $i$).
	\item (Consistency 2) For any edge $e=\{a,b\}\in E(H)$ such that $c_H(a)=i$, $c_H(b)=j$, if the copy of $e$ in Row $i$ is selected in our solution then our solution contains  the copy of $e$ in Row $j$ as well.  
	
	\end{itemize} 

It is straightforward to see that a set of edges of $\widehat H$ which satisfy the stated properties imply a solution to our \PSI\ instance in an obvious way. In order to obtain our \IP\ instance, we create a variable for every edge in $\widehat H$ (or 2 for every edge in $E(H)$) and encode the properties stated above in the form of constraints. 
 We now formally define the \IP\ instance output by our reduction.

\begin{figure}[t]
\begin{center}
  \includegraphics[height=160 pt, width=340 pt]{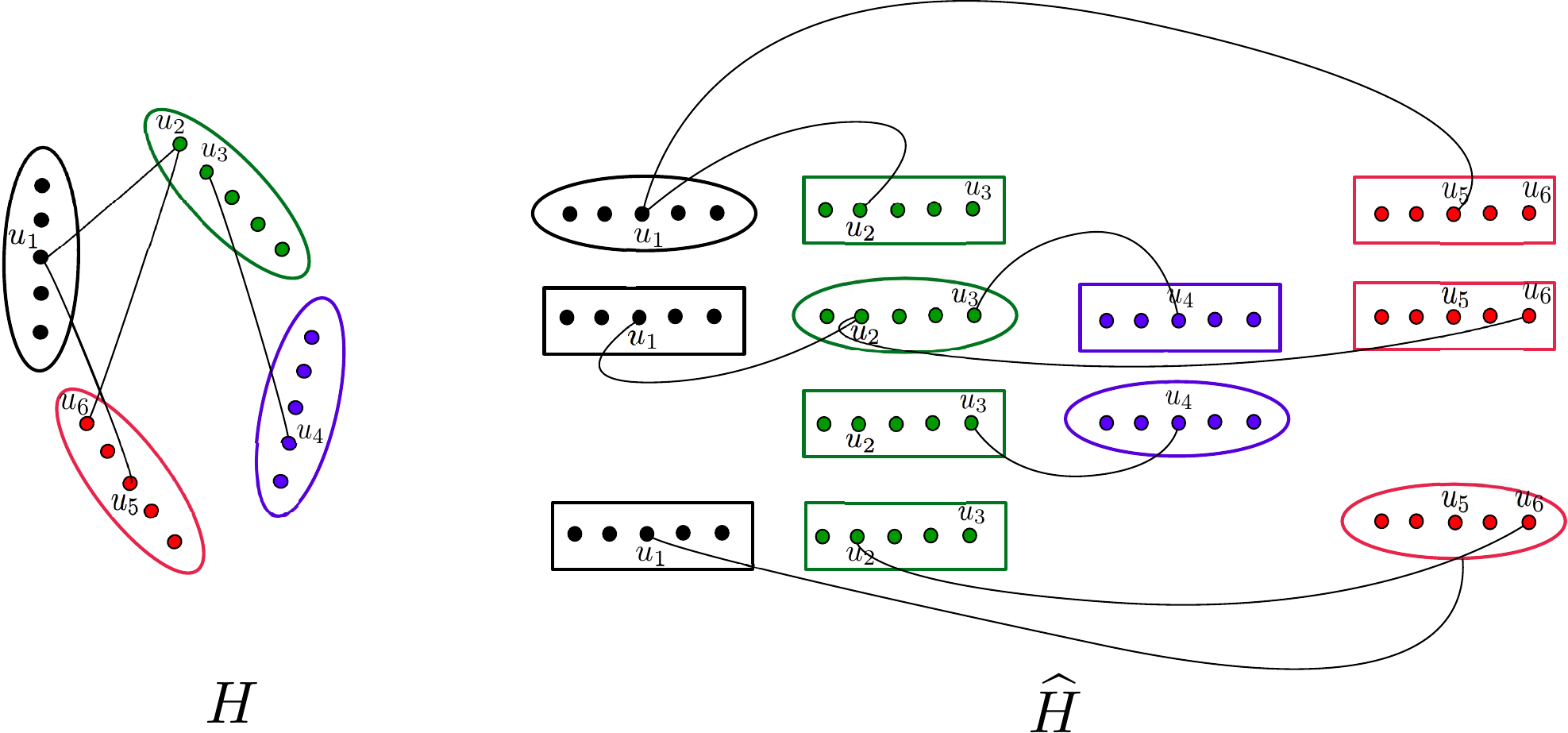}

  \caption{An illustration of the auxiliary graph $\widehat{H}$ capturing the representation of the  vertices and {\em some} edges of $H$. }
\label{fig:PSI_example_aux_graph}
  \end{center}
\end{figure}

The set of indeterminants  $x$ of the \IP instance is 
$$\left\{x(\{a,b\},c_H(a),c_H(b))\colon \{a,b\}\in E(H)  \right\}.$$ 

Notice that for any $\{a,b\}\in E(H)$, there exist an associated pair of indeterminants, namely 
$x(\{a,b\},c_H(a),c_H(b))$ and $x(\{a,b\},c_H(b),c_H(a))$.
Thus the cardinality of $x$ is upper bounded by $2 \vert E(H)\vert =2m$. 
Recall that  $\{v_1,\ldots,v_{\ell}\}=V(G)$ and $c_G(v_i)=i$ for all $i\in [\ell]$, where $\ell=\vert V(G)\vert$. 
For each $v_i\in V(G)$ we define $2d_G(v_i)-1$ many constraints as explained below. 
Let $r=d_G(v_i)$ and $N_G(v_i)=\{v_{j_1},\ldots,v_{j_r}\}$. 
The constraints for $v_i\in V(G)$ are the following.  
 For all  $q\in [r]$,
\begin{equation}
 \sum_{\substack{e\in E_H(i,j_q)}} x(e,i,j_q) = 1 \label{eqn:psi1}
 \end{equation}

 The constraints of the form above enforce the (Selection) property described in our informal summary. 
 
 For all   $q\in [r-1]$, 
\begin{equation} 
  \sum_{\substack{\{a,b\}\in E_H(i,j_q)\\ a\in c_H^{-1}(i)}} a\cdot x(\{a,b\},i,j_q) + 
   \sum_{\substack{\{a,b'\}\in E_H(i,j_{q+1})\\ a \in c_H^{-1}(i)}} (n-a)\cdot x(\{a,b'\},i,j_{q+1})
   = n \label{eqn:psi2}     
\end{equation} 
The constraints of the form above together enforce the (Consistency 1) property described in our informal summary.

For each $\{v_i,v_j\}\in E(G)$ with $i<j$, we define the following constraint in the \IP instance.  
\begin{eqnarray}
  \sum_{\substack{ \{a,b\}\in E_H(i,j)\\ a\in c_H^{-1}(i)}} \alpha(\{a,b\}) \cdot x(\{a,b\},i,j) +\sum_{ \substack{ \{a,b\}\in E_H(i,j)\\b\in c_H^{-1}(j)}} (m-\alpha(\{a,b\})) \cdot x(\{a,b\},j,i)=m  \label{eqn:psi3}
\end{eqnarray} 
The constraints of the form above together enforce the (Consistency 2) property described in our informal summary.

This completes the construction of the \IP instance $Ax=b,x\geq 0$. Notice that 
the construction of instance $Ax=b,x\geq 0$ can be done in polynomial time. Clearly, the number of rows in $A$ is 
$\vert E(G)\vert + \sum_{v\in V(G)} 2d_{G}(v)-1\leq 5k$ and number of columns in $A$ is $2m$. 
Now we prove the correctness of the reduction. 
\begin{lemma}
\label{lem:psicor}
$(G,H,c_G,c_H)$ is a {\sc Yes} instance of \PSI\ if and only if 
$Ax=b,x\geq 0$ is feasible.  Moreover, if $Ax=b,x\geq 0$ is feasible, then for any solution $x^*$, each entry  
of $x^*$ belongs to $\{0,1\}$. 
\end{lemma}
\begin{proof}
Suppose $(G,H,c_G,c_H)$ is a {\sc Yes} instance of \PSI. Let $\phi\colon V(G)\mapsto V(H)$ be a solution to $(G,H,c_G,c_H)$. Now we define a 
solution $x^*\in \{0,1\}^{2m}$ to the instance $Ax=b,x\geq 0$ of \IP. We know that for each edge $\{v_i,v_j\}\in E(G)$, $\{\phi(v_i),\phi(v_j)\}\in E(H)$. For 
each edge $\{v_i,v_j\}\in E(G)$, we set $x^*(\{\phi(v_i),\phi(v_j)\},i,j)=x^*(\{\phi(v_i),\phi(v_j)\},j,i)=1$. For every  other indeterminant, we set its value 
to $0$. Now we prove that  $Ax^*=b$.

Towards that first consider  $(\ref{eqn:psi1})$. Fix a vertex $v_i\in V(G)$ and $v_{j_q}\in N_G(v_i)$. Since $\{v_i,v_{j_q}\}\in E(G)$, $x^*(\{\phi(v_i),\phi(v_{j_q})\},i,j_q)=1$. Moreover, since $G$ is a simple graph, for any edge 
$e\in E_H(i,j_q)\setminus \{\{\phi(v_i),\phi(v_{j_q})\}\}$, $x^*(e,i,j_q)=0$. This implies that $(\ref{eqn:psi1})$ is satisfied by $x^*$. Next we consider (\ref{eqn:psi2}). Fix a vertex $v_i\in V(G)$. Let $N_G(v_i)=\{v_{j_1},\ldots,v_{j_r}\}$.
Also, fix $q\in [r-1]$. We know that $\{v_i,v_{j_{q}}\},\{v_i,v_{j_{q+1}}\}\in E(G)$. By the definition of $x^*$, we have that 
$x^*(e,i,j_q)=1$ if and only if $e=\{\phi(v_i),\phi(v_{j_q})\}$ and $x^*(e',i,{j_{q+1}})=1$ if and only if $e'=\{\phi(v_i),\phi(v_{j_{q+1}})\}$. 
 Thus we have that
\begin{eqnarray*}
  \sum_{\substack{\{a,b\}\in E_H(i,j_q)\\ a\in c_H^{-1}(i)}} a\cdot x(\{a,b\},i,j_q) + 
   \sum_{\substack{\{a,b'\}\in E_H(i,j_{q+1})\\ a \in c_H^{-1}(i)}} (n-a)\cdot x(\{a,b'\},i,j_{q+1})\\
=\phi(v_i)+(n-\phi(v_i))=n
\end{eqnarray*}
That is, $x^*$ satisfies (\ref{eqn:psi2}).  Now we consider (\ref{eqn:psi3}). Fix an edge $\{v_i,v_j\}\in E(G)$ where $i<j$. 
Again by the definition of $x^*$, we have that $x^*(e,i,j)=1$ if and only if $e=\{\phi(v_i),\phi(v_{j})\}$ and $x^*(e,j,i)=1$ if and only if $e=\{\phi(v_i),\phi(v_{j})\}$. This implies that (\ref{eqn:psi3}) is satisfied by $x^*$. Therefore $Ax=b,x\geq 0$ is feasible. 

Now we prove the converse direction of the lemma. Suppose that $Ax=b,x \geq 0$ is feasible and let $x'\in {\mathbb N}_0^{2m}$ be  a solution.
\begin{claim}
\label{claimpsi1}
Let $i,j\in [\ell]$ such that $i\neq j$ and $\{v_i,v_j\}\in E(G)$. 
Then there exists exactly one edge $e\in E_H(i,j)$ such that $x'(e,i,j)=x'(e,j,i)=1$. Moreover, for any 
$e'\in E_H(i,j)\setminus \{e\}$, $x'(e',i,j)=x'(e',j,i)=0$. 
\end{claim} 
\begin{proof}
By  (\ref{eqn:psi1}), we have that there exists exactly one edge $e_1\in E_H(i,j)$ such that 
$x'(e_1,i,j)=1$ and for all other edges $h\in E_H(i,j)\setminus \{e_1\}$, $x'(h,i,j)=0$. Again by  
(\ref{eqn:psi1}), we have that there exists exactly one edge $e_2\in E_H(i,j)$ such that 
$x'(e_2,j,i)=1$ and for all other edges $h\in E_H(i,j)\setminus \{e_2\}$, $x'(h,j,i)=0$. 
By   (\ref{eqn:psi3}), we have that $e_1=e_2$. This completes the proof of the claim. 
\end{proof}

Now we define an injection $\phi\colon V(G)\mapsto V(H)$ and prove that indeed $\phi$ is a subgraph isomorphism from $G$ to $H$. 
For any $i,j\in [\ell]$ with  $i\neq j$ and $\{v_i,v_j\}\in E(G)$ 
consider the edge $e=\{a,b\}\in E_H(i,j)$ such that $x'(\{a,b\},i,j)=x'(\{a,b\},j,i)=1$ (by Claim~\ref{claimpsi1}, there exits exactly one such edge in $E_H(i,j)$).
Let $a\in c_H^{-1}(i)$ and $b\in c_H^{-1}(j)$. Now we set $\phi(v_i)=a$ and $\phi(v_j)=b$.  
We claim that $\phi$ is well defined. Fix a vertex $v_i\in V(G)$. Let $r=d_G(v_i)$ and $N_G(v_i)=\{v_{j_1},\ldots,v_{j_r}\}$. By Claim~\ref{claimpsi1}, 
we know that for any $q\in [r]$, there exists exactly one edge $\{a_q,b_q\}\in E_H(i,j)$ such that 
$x'(\{a_q,b_q\},i,j_q)=x'(\{a_q,b_q\},j_q,i)=1$. Here, $a_q\in c_H^{-1}(i)$ and $b_q\in c_H^{-1}(j_q)$.  To prove that $\phi$ is well defined, it is enough to prove that $a_1=a_2=\ldots=a_r=\phi(v_i)$. By (\ref{eqn:psi2}), we have that for any $q\in [r-1]$, $a_q=a_{q+1}$.  
Also since $x'(\{a_q,b_q\},i,j_q)=x'(\{a_q,b_q\},j_q,i)=1$ for all $q\in [r]$, we have that $a_1=a_2=\ldots=a_r=\phi(v_i)$.  From the construction 
of $\phi$, we have that for any $i,j\in [\ell]$, $i\neq j$, $\phi(v_i)\in c_H^{-1}(i)$ and $\phi(v_j)\in c_H^{-1}(j)$. Moreover, $c_H^{-1}(i)\cap c_H^{-1}(j)=\emptyset$. This implies that $\phi$ is an injective map. 

Now we prove that $\phi$ is an isomorphism from $G$ to $H$. Since $\phi(v_i)\in c_H^{-1}(i)$ for all $i\in[\ell]$, to prove that 
$\phi$ is an isomorphism, it is enough to prove that for any edge $\{v_i,v_j\}\in V(G)$, $\{\phi(v_i),\phi(v_j)\}\in E(H)$.  
Fix an edge $\{v_i,v_j\}\in V(G)$ with $i<j$. 
By Claim~\ref{claimpsi1},  there exists exactly one edge $\{a,b\}\in E_H(i,j)$ such that 
$x'(\{a,b\},i,j)=x'(\{a,b\},j,i)=1$, where $a\in c_H^{-1}(i)$ and $b\in c_H^{-1}(j)$. From the definition of 
$\phi$, we have that $\phi(v_i)=a$ and $\phi(v_j)=b$. That is, $\{\phi(v_i),\phi(v_j)\}=\{a,b\}\in E(H)$. 

By Claim~\ref{claimpsi1}, we conclude that if $Ax=b,x\geq 0$ is feasible, then for any solution $x^*$, each entry  
of $x^*$ belongs to $\{0,1\}$. 
This completes 
the proof of the lemma. 
\end{proof}
\begin{proof}[Proof of Theorem~\ref{thm:ETHIP2}]
Let $(G,H,c_G,c_H)$ be an instance of \PSI. Let $Ax=b,x\geq 0$ be the instance of \IP\ constructed from $(G,H,c_G,c_H)$ as mentioned above. 
We know that the construction of $Ax=b,x\geq 0$ takes time polynomial in $n$, where $n=\vert V(H)\vert$. Also, we know 
that the number of rows and columns in $A$ is $\leq 5\vert E(G)\vert$ and $2\vert E(H)\vert$, respectively. Moreover, the 
maximum entry in $b$ is $\max \{\vert V(H)\vert, \vert E(H)\vert\}$. 

Suppose there is an algorithm ${\cal A}$ for \IP, running in time $f(m')(n'\cdot d')^{o\left(\frac{m'}{\log m'}\right)}$ on instances where the constraint matrix is non-negative and is of dimension $m'\times n'$, and the maximum entry in the target vector is $d'$. Then, by running ${\cal A}$ on $Ax=b,x\geq 0$ and applying Lemma~\ref{lem:psicor}, we solve \PSI\ in time $f(G)n^{o\left(\frac{k}{\log k}\right)}$. 
Thus by Lemma~\ref{lem:psimarx}, ETH fails. 
This completes the proof of the theorem.  
\end{proof}

\label{sec:lowbranchwidth}

\section{Path-width parameterization: SETH bounds }\label{sec:SETHLB}

In this section we prove Theorems~\ref{thm:lowbranchwidth} and~\ref{thm:lowentries}. 
\subsection{Overview of our reductions}
\label{subsec:overview}


We prove Theorems~\ref{thm:lowbranchwidth} and \ref{thm:lowentries} 
by giving reductions from {\sc CNF-SAT}. At this point, one might be tempted to start the reduction from {\sc $k$-CNF SAT} as seen in~\cite{CyganDLMNOPSW12}. However, the fact that in our case we also need to control the \pw of the reduced instance poses serious technical difficulties if one were to take this route. Therefore, we take a different route and reduce from {\sc CNF-SAT} which allows us to construct appropriate gadgets for propagation of consistency in our instance while simultaneously controlling the \pw. 
Moreover, 
%
%
%
%
the parameters in the 
reduced instances are required to obey certain strict conditions. For example, the reduction we give to prove 
Theorem~\ref{thm:lowbranchwidth} must output an instance of \IP, where the \pw of the 
column matroid $M(A)$ of the constraint matrix $A$  is a constant. Similarly, in the reduction 
used to prove Theorem~\ref{thm:lowentries}, we need to construct   an instance of \IP where  
the largest entry in the target vector is upper bounded by a constant. 
These stringent requirements on the parameters make the {SETH}-based reductions quite  challenging.
However, reductions under SETH can take super polynomial time---they can even take  $2^{(1-\epsilon)n}$ time for some $\epsilon>0$, where $n$ 
is the number of variables in the instance of {\sc CNF-SAT}.    This freedom to avail exponential time 
in SETH-based reductions  is used crucially in the proofs of Theorems~\ref{thm:lowbranchwidth} and \ref{thm:lowentries}.

%
%

Now we give an overview of the reduction used 
to prove  Theorem~\ref{thm:lowbranchwidth}. 
Let $\psi$ be an instance of {\sc CNF-SAT} 
with $n$ variables and $m$ clauses.  
Given $\psi$ and a fixed constant $c\geq 2$, we construct an instance  $\cm x= \tv, x\geq 0$ of \IP\ 
satisfying certain properties. Since for every $c\geq 2$, we have a different $\cm$ and $\tv$, this can be 
viewed as a family of instances of \IP. In particular our main technical lemma is the following. 


\begin{restatable}{lem}{lemtechnicalintro}
\label{lemtechnicalintro}{\em
Let $\psi$ be an instance of {\sc CNF-SAT}  with $n$ variables and $m$ clauses. Let $c\geq 2$ be a fixed integer. Then, 
in time $\OO(m^2 2^{\frac{n}{c}})$, we can  construct an  instance $\cm x=\tv,x\geq 0$,  of \IP{} with the following properties.
 \begin{enumerate} 
 \setlength{\itemsep}{-2pt}
 \item[(a.)] $\psi$ is satisfiable if and only if $\cm x=\tv, x\geq 0$ is feasible. 
\item[(b.)] The matrix $\cm$ is non-negative and has dimension $\OO(m)\times \OO(m 2^{\frac{n}{c}})$.  
\item[(c.)] The \pw of the column matroid of $\cm$ is at most $c+4$. 
\item[(d.)] The largest entry in $\tv$ is at most $2^{\lceil \frac{n}{c}\rceil}-1$.  
\end{enumerate}}
\end{restatable}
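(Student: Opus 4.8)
The plan is to reduce from \textsc{CNF-SAT} using a ``split-and-encode'' strategy: partition the $n$ variables into $c$ blocks of size roughly $n/c$, enumerate all partial assignments of each block (this is where the $2^{n/c}$ factor and the allotted $\OO(m^2 2^{n/c})$ construction time come from), and build a constraint matrix that (i)~forces the solution to pick exactly one partial assignment per block and (ii)~verifies that the $c$ chosen partial assignments jointly satisfy every clause. The target entries will be base-$2$ encodings of subsets of variables within a block, which is exactly why $d \le 2^{\lceil n/c\rceil}-1$ as in (d.); and the ``path-like'' structure — the $c$ blocks are processed left to right, each interacting only with a constant-size interface carrying the running clause-satisfaction status — is what will give \pw $\le c+4$ as in (c.).

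Concretely, I would introduce, for each block $j \in [c]$, a group of columns indexed by the $2^{\lceil n/c\rceil}$ partial assignments $\alpha$ of that block; the multiplier $x_\alpha$ of such a column will be forced to be $0$ or $1$. A ``selector'' row per block, whose target entry is $1$, enforces $\sum_\alpha x_\alpha = 1$, i.e.\ exactly one $\alpha$ is chosen. To encode \emph{which} variables in block $j$ are set to true, I use one further row per block whose $\alpha$-entry is the integer whose binary representation is the characteristic vector of $\alpha$; summed against the selected $x_\alpha$ this records the block's assignment, and the target on that row is left as a variable/slack so it is automatically consistent — the real work is clause satisfaction. For clauses I would add $m$ rows (or $\OO(m)$ rows, using a few auxiliary slack columns so the system is an equality system with a non-negative matrix): the column for partial assignment $\alpha$ of block $j$ contributes to clause-row $\ell$ a contribution that ``marks'' $\ell$ as satisfied if $\alpha$ already satisfies $\ell$; a slack column per clause absorbs the remaining amount, and the target is set so that every clause must be marked at least once. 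Non-negativity is maintained throughout because all these encodings and slacks are non-negative; the dimensions are $\OO(m) \times \OO(m 2^{n/c})$ as required in (b.).

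The correctness direction (a.) is then the routine part: a satisfying assignment of $\psi$ restricts to one partial assignment per block, giving a $0/1$ choice of columns that satisfies the selector rows, and satisfies every clause row because the chosen assignment satisfies $\psi$; conversely, any feasible $x$ with $0/1$ multipliers (which the selector rows force) picks one partial assignment per block, and the clause rows certify that the glued assignment satisfies all clauses. A small amount of care is needed to ensure the multipliers really are confined to $\{0,1\}$ — this is handled by the selector rows with target $1$ together with non-negativity and integrality, possibly with a few auxiliary ``bit-gadget'' columns.

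The main obstacle, and the step I would spend the most effort on, is item (c.): showing the column matroid of $\cm$ has \pw at most $c+4$. The intended ordering of the columns is block by block ($j=1,\dots,c$), with the clause-slack columns and selector-slack columns interleaved appropriately; I must verify that for every prefix in this ordering, the intersection of the span of the prefix with the span of the suffix has dimension at most $c+3$. The idea is that once blocks $1,\dots,j$ have been ``read'', the only information that the remaining columns can still interact with through the shared span is carried by a constant number of rows — the few rows encoding ``is clause $\ell$ already satisfied'' are handled by the clause-slack columns which come later, so the interface is really just a bounded-dimension bottleneck independent of $m$ and $n$. Making this dimension count precise — choosing exactly which auxiliary columns to include and where, and bounding the rank of the overlap by an explicit constant — is the delicate technical heart of the lemma, and I would structure the proof so that this ordering and the associated rank bound are stated as a separate sublemma before assembling the four claimed properties.
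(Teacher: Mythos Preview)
Your overall strategy --- partition into $c$ blocks, enumerate the $2^{\lceil n/c\rceil}$ partial assignments of each block, use selector rows and clause rows --- is the natural first attempt, and items (a.), (b.), (d.) would go through roughly as you sketch. The gap is item (c.), and it is not a matter of ``care in the dimension count'': with the architecture you describe, the \pw bound fails outright.

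Here is why. You order the columns block by block and keep all $m$ clause rows in the matrix. Consider the cut right after the columns of block~$1$. Any vector in $\operatorname{span}(\text{prefix})\cap\operatorname{span}(\text{suffix})$ must be zero in block~$1$'s selector row (all suffix columns are zero there), hence it is a combination of block-$1$ columns with coefficients summing to zero; projected to the $m$ clause rows, such combinations span exactly the space of \emph{differences} of clause-satisfaction patterns of the $2^{\lceil n/c\rceil}$ partial assignments of block~$1$. Since the clause-slack columns lie in the suffix, every such difference also lies in $\operatorname{span}(\text{suffix})$. In general the partial assignments of a single block realise clause-satisfaction patterns whose affine span has dimension $\Theta(m)$ (e.g.\ when many clauses each contain a distinct literal from block~$1$), so the intersection has dimension $\Theta(m)$, not $c+O(1)$. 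Interleaving the slacks cannot repair this, because the large overlap is already created by the block columns themselves; the underlying reason is that ``which clauses are already satisfied'' is $m$ bits of interface.

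The paper's construction avoids this by the opposite decomposition: it orders the columns \emph{clause by clause}, and for each clause $C_r$ it introduces a \emph{fresh copy} of all $c\cdot 2^{\lceil n/c\rceil+1}$ block-assignment columns. There is then a single evaluation row per clause, local to that clause's block of columns, so clause rows never accumulate into a wide interface. The new difficulty --- forcing the assignment chosen for block $X_i$ at clause $C_r$ to agree with the one chosen at $C_{r+1}$ --- is handled by a ``predecessor/successor matching'' gadget: the column for assignment $\phi_j(X_i)$ in $C_r$ writes $L-1-j$ into a row it shares with $C_{r+1}$, the column for $\phi_{j'}(X_i)$ in $C_{r+1}$ writes $j'$ into that same row, and setting the target to $L-1$ forces $j=j'$. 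With this chain structure the rows straddling any cut are just the $O(c)$ matching rows between two consecutive clauses plus one evaluation row and one selector row, giving \pw $\le c+4$. This per-clause duplication is also why the column count is $\Theta(m\cdot 2^{n/c})$ rather than the $\Theta(2^{n/c}+m)$ your sketch would produce --- the factor $m$ in (b.) is essential to the \pw argument, not slack in the bound.
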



Once we have Lemma~\ref{lemtechnicalintro},  the proof of Theorem~\ref{thm:lowbranchwidth}  follows from the following observation: if we have an algorithm $\cal A$
solving \IP  in time  $f(k)(\valueB+1)^{(1-\epsilon)k}(mn)^{a}$  for some 
$\epsilon, a>0$, then we can use this algorithm to refute SETH. In particular, given an instance $\psi$ of {\sc CNF-SAT}, we 
choose an appropriate $c$ depending only on $\epsilon$ and $a$,  construct an instance 
$\cm x=\tv,x\geq 0,$ of \IP,  and run $\cal A$ on it. Our careful choice of $c$ will imply a faster algorithm for {\sc CNF-SAT}, refuting SETH. 
%
More formally, we choose $c$ to be an integer  such that $(1-\epsilon)+\frac{4(1-\epsilon)}{c}+\frac{a}{c}<1$. 
Then the total running time to test whether $\psi$ is satisfiable, is the time require to construct 
$\cm x=\tv, x\geq 0$ plus the time required by   $\cal A$ to solve the constructed instance of \IP. That is, the time required to  test whether $\psi$ is satisfiable is 
$$
\OO(m^2 2^{\frac{n}{c}})+ f(c+4)2^{\frac{n}{c} (1-\epsilon)(c+4)} 2^{\frac{a\cdot n}{c}} m^{\OO(1)} 
= 2^{\left( (1-\epsilon)+\frac{4(1-\epsilon)}{c} +\frac{a}{c} \right)n} m^{\OO(1)}
= 2^{\epsilon'n} m^{\OO(1)},
$$
 where $\epsilon'<1$ is a constant depending on the choice of $c$.
It is important to {note} that the utility of the reduction described in  Lemma~\ref{lemtechnicalintro} 
is extremely sensitive to the value of the numerical parameters involved.  In particular, even when the \pw blows up slightly, say up to $\delta c$, 
or when the largest entry in $\tv$ blows up slightly, say up to $2^{\delta \frac{n}{c}}$, for some $\delta>1$, then 
the  calculation above will \emph{not} give us the desired refutation of {SETH}.  Thus, the challenging part 
of the reduction described in Lemma~\ref{lemtechnicalintro} is making it work under these strict restrictions on the relevant parameters.   

\begin{figure}
\centering
 \begin{subfigure}[b]{0.45\textwidth}
 \[
   \left(\begin{array}{ccccccc}
  1&1& \\
&1&1\\
&&1&1\\
&&&&\ddots\\
&&&&&1&1
  \end{array}\right)
 \]
 \caption{A matrix $B$ for which  \pw of its column matroid is $1$}
 \label{fig:B}
 \end{subfigure}
  \begin{subfigure}[b]{0.5\textwidth}
 \[
   \left(\begin{array}{cccccc}
 \cline{1-1}
      \multicolumn{1}{|c|}{\cellcolor{gray!50} \scalebox{1}{$B_1$}} & \\ \cline{1-1}
       \multicolumn{1}{|c|}{\cellcolor{blue!50}} & \\ \cline{1-1}
      \multicolumn{1}{|c|}{\cellcolor{green!50}} & \multicolumn{1}{|c|}{\cellcolor{yellow!50}} \\ \cline{1-1} 
     &\multicolumn{1}{|c|}{\cellcolor{gray!50} \scalebox{1}{$B_2$}} & \\ 
      & \multicolumn{1}{|c|}{\cellcolor{blue!50} } & \\ 
     & \multicolumn{1}{|c|}{\cellcolor{green!50}}   \\ \cline{2-2}
    && \ddots \\ \cline{4-4}
&&& \multicolumn{1}{|c|}{\cellcolor{yellow!50}} \\
&&&      \multicolumn{1}{|c|}{\cellcolor{gray!50} \scalebox{1}{$B_{m-1}$}}  \\ 
&&&      \multicolumn{1}{|c|}{\cellcolor{blue!50} }  \\ \cline{5-5}
 &&&     \multicolumn{1}{|c|}{\cellcolor{green!50}} & \multicolumn{1}{|c|}{\cellcolor{yellow!50}} \\ \cline{4-4} 
 & &&   &\multicolumn{1}{|c|}{\cellcolor{gray!50} \scalebox{1}{$B_m$}} & \\ 
  & &&  & \multicolumn{1}{|c|}{\cellcolor{blue!50}}   \\ \cline{5-5}

  \end{array}\right)
 %
 \]
 \caption{A pictorial representation of the matrix $\cm$.}
 \label{fig:A}
 \end{subfigure}
\caption{Comparison of $\cm$ with a low path-width matrix.}
 \label{fig:comparison}
\end{figure}


As stated in Lemma~\ref{lemtechnicalintro},  in our reduction, we need to obtain a constraint matrix with small \pw. An important first step towards this is understanding what a matrix of small \pw looks like.
We first give an intuitive description of the structure of such matrices.  
Let $A$ be a  $m\times n$ matrix of small \pw and 
let $M(A)$ be the column matroid of $A$. 
For any $i\in \{1,\ldots,n-1\}$, let $A|\{1,\ldots i\}$ denote the set of columns (or vectors) in $A$ whose index is at most $i$ (that is, the first $i$ columns)  and let ${A}|\{i+1,\ldots n\}$ denote the set of columns with index strictly 
greater than $i$. 
The \pw 
of $M(A)$ is at most 
$$\max_i \operatorname{dim}\langle  \operatorname{span}(A|\{1,\dots, i\})\cap\operatorname{span}(A|\{i+1, \dots, n\})\rangle +1 .$$
Hence, in order to obtain a bound on the pathwidth, it is sufficient to bound  $\operatorname{dim}\langle  \operatorname{span}(A|\{1,\dots, i\})\cap\operatorname{span}(A|\{i+1, \dots, n\})\rangle$ for every $i\in [n]$.
 Consider for example, the matrix $B$ given in Figure~\ref{fig:B}. The \pw of $M(B)$ is clearly at most $1$. In \emph{our} reduced instance, the constructed constraint matrix $\cm$ will be an  appropriate extension of $B$. That is $\cm$ will have the ``same form'' as $B$ but with each $1$ replaced by a submatrix of order $\OO(c)\times n'$ for some $n'$.  See Fig.~\ref{fig:A} for a pictorial  representation of $\cm$.


The  construction  used in Lemma~\ref{lemtechnicalintro}  takes as input an instance $\psi$  of {\sc CNF-SAT} with $n$ variables and  a fixed integer $c\geq 2$,  and outputs an  instance $\cm x=\tv,x\geq 0$,  of \IP{}, that satisfies all four properties of the lemma. Let $X$ denote the set of variables in the input {\sc CNF}-formula $\psi=C_1\wedge C_2\wedge\ldots \wedge C_{m}$.  For the purposes of the present 
discussion we assume that $c$ divides $n$. We partition the variable set $X$ into 
$c$ blocks $X_0,\ldots,X_{c-1}$, each of size $\frac{n}{c}$.
Let ${\cal X}_i$, $i \in \{0,\ldots, c-1\}$, denote the set of assignments of variables corresponding to 
$X_i$. Set $\ell=\frac{n}{c}$ and $L=2^{\ell}$.  Clearly, the size of ${\cal X}_i$ is upper bounded by $2^{\frac{n}{c}}=2^\ell= L$.   
 We denote the assignments in ${\cal X}_i$  by $\phi_{0}(X_{\ii}),\phi_1(X_{\ii}),\ldots, \phi_{L-1}(X_{\ii})$. 
To construct  the matrix $\cm$, we view ``each of these assignments as a different assignment 
for each clause''.  In other words we have separate sets of variables
in the constraints 
corresponding to different pairs $(C_r,X_\ii)$, where $C_r$ is a clause and  $X_\ii$ is a block in the partition of $X$. 
That is for each clause $C_r$ and block $X_\ii$, we have variables $\{y_{C_r,\ii,a} ~a\in \ZZ{2L}~\}$. In other words 
for each $C_r$ and assignment $\phi_a(X_i)$, $a\in \ZZ{L}$, we have two variables $y_{C_r,\ii,2a}$ and $y_{C_r,\ii,2a+1}$. 
For any clause $C_r$, $\ii\in \ZZ{c}$ and $a\in \ZZ{2L}$, assigning value $1$ to $y_{C,\ii,a}$ corresponds to choosing an assignment 
$\phi_{\lfloor \frac{a}{2}\rfloor}(X_\ii)$ for $X_\ii$. 
In our reduction we will create the following set of constraints.   

\begin{eqnarray}
\sum_{\substack{i\in [c],a \in \ZZ{2L} \mbox{ such that} \\ 
a \mbox{ is even and} \\ 
\phi_{\lfloor \frac{a}{2}\rfloor}(X_{\ii}) \mbox{ satisfies } C}} y_{C,i,a}&=&1 \qquad  \mbox{ for all } C\in \CC\label{eqn:sat0}\label{eqn:sat}\\
\sum_{\substack{a \in\ZZ{2L} }} y_{C,i,a}&=&1 \qquad  \mbox{ for all } C\in \CC \mbox{ and } i\in\ZZ{c}\label{eqn:onlyone0}\label{eqn:onlyone}
\end{eqnarray}
Equation~(\ref{eqn:sat0}) takes care of satisfiability of clauses, while Equation~(\ref{eqn:onlyone0}) allows us to pick only one 
assignment from  $\{\phi_{0}(X_{\ii}),\phi_1(X_{\ii}),\ldots, \phi_{L-1}(X_{\ii})\}$ per clause $C$ and block $X_{\ii}$. 
Note that this implies that we will choose an assignment in ${\cal X}_\ii$ for each clause $C_r$. 
That way  we might choose $m$ assignments from  ${\cal X}_\ii$ corresponding to $m$ different clauses. However, for the backward direction of the proof, it is important that we choose the {\em same} assignment from ${\cal X}_\ii$ for each clause. This will ensure that we have selected an assignment to the variables in $X_\ii$. Towards this  we will have a third set of constraints as follows. 
\begin{eqnarray}
\sum_{\substack{a \in \ZZ{2L} }} \left(\lfloor \frac{a}{2}\rfloor\cdot y_{C_r,i,a}\right)   + \left((L-1-\lfloor \frac{a}{2}\rfloor) y_{C_{r+1},i,a}\right)=L-1 \;  \mbox{ for all } r\in [m-1] \mbox{ and } i\in\ZZ{c}\;\label{eqn:consistancy0}\label{eqn:consistancy}
\end{eqnarray}


Equation~(\ref{eqn:consistancy0}) enforce  consistencies of assignments of blocks across clauses in a {\em sequential} manner. That is, for any block 
$X_i$, we make sure that the two variables set to $1$ corresponding to $(C_{r},X_i)$ and $(C_{r+1},X_i)$ are consistent for any $r\in \{1, \ldots, m-1\}$, as opposed to checking the consistency for every pair  
$(C_{r},X_i)$ and $(C_{r'},X_i)$ for $r\neq r'$. Thus in some sense these consistencies {\em propagate}.  
%
Furthermore, the idea of making consistency in a sequential manner also allows us to bound the \pw of column matroid of $\cm$ by $c+4$.

The proof technique for Theorem~\ref{thm:lowentries} is similar to that  for Theorem~\ref{thm:lowbranchwidth}. This is achieved 
by modifying  the matrix $\cm$ constructed in the reduction described for Lemma~\ref{lemtechnicalintro}. 
The largest entry in $\cm$ is $2^{\frac{n}{c}}-1$ (see Equation~(\ref{eqn:consistancy0})). So each of these values can be represented 
by a binary string of length at most $\ell=\frac{n}{c}$.  We remove each row, say row indexed by $\gamma$, with entries greater than $1$ and replace it with $\frac{n}{c}$ rows,  $\gamma_{1},\ldots,\gamma_{\ell}$. Where, for any 
$j$, if the value $\cm[\gamma,j]=W$ then $\cm[\gamma_{k},j]=\eta_k$, where $\eta_k$ is the $k^{th}$ bit in the 
$\ell$-sized binary representation of $W$.  
This modification reduces the largest entry in $\cm$ to  $1$ and increases the \pw from constant to approximately $n$. Finally, we set all the entries in $\tv$ to be $1$. This concludes the overview of our reductions and we now proceed to a  detailed exposition.

\subsection{Proof of Theorem~\ref{thm:lowbranchwidth}}\label{subsec:lowbr}

In this section we provide a Proof of Theorem~\ref{thm:lowbranchwidth}, which states that unless SETH fails, \IP{} with non-negative matrix $A$ cannot be solved in time $f(k)(\valueB+1)^{(1-\epsilon)k}(mn)^{\OO(1)}$ for any function 
$f$ and $\epsilon>0$,   
where $d=\max\{b[1],\ldots,b[m]\}$ and $k$ is the \pw of the column matroid of $A$. 


Towards the proof of Theorem~\ref{thm:lowbranchwidth}, we first present the proof of our main technical lemma 
(Lemma~\ref{lemtechnicalintro}), which we restate here for the sake of completeness.

\lemtechnicalintro*

Let $\psi=C_1\wedge C_2\wedge\ldots \wedge C_m$ be 
an instance  of {\sc CNF-SAT} with variable set $X=\{x_1,x_2,\ldots,x_n\}$ and let $c\geq 2$ be a fixed constant given in the statement of Lemma~\ref{lemtechnicalintro}. 
We construct the instance $\cm x=\tv,x\geq 0$ of \IP as follows.

\medskip
\noindent
{\bf Construction.} 
Let $\CC=\{C_1,\ldots,C_m\}$. 
Without loss of generality, we assume that $n$ is divisible 
by $c$, otherwise we add at most $c$  dummy variables to $X$ such that $\vert X\vert$  
is divisible by $c$. We divide $X$ into $c$ blocks $X_0,X_1,\ldots,X_{c-1}$. That is 
$X_{\ii}=\{x_{\frac{\ii \cdot n}{c}+1},x_{\frac{\ii\cdot n}{c}+2},\ldots, x_{\frac{(\ii+1)\cdot n}{c}}\}$ for each $\ii\in \ZZ{c}$. 
Let $\ell=\frac{n}{c}$ and $L=2^{\ell}$. 
For each block $X_{\ii}$, there are exactly $2^{\ell}$ assignments. We denote these 
assignments by $\phi_{0}(X_{\ii}),\phi_1(X_{\ii}),\ldots, \phi_{L-1}(X_{\ii})$. 

Now, we create $m\cdot c\cdot 2^{\ell+1}$ variables; they are named $y_{C,i,a}$, where 
$C\in \CC$, $i\in \ZZ{c}$ and $a\in \ZZ{2L}=\ZZ{2^{\ell+1}}$. In other words, for a clause $C$, a block $X_{\ii}$ and an assignment 
$\phi_a(X_{\ii})$, we create two variables; they are $y_{C,\ii,2a}$ and $y_{C,\ii,2a+1}$.  
Then, we create the \IP constraints given by Equations~(\ref{eqn:sat}), (\ref{eqn:onlyone}), and (\ref{eqn:consistancy}). 

This completes the construction of \IP instance. 
Let $\cm y=\tv$ be the  \IP instance defined using Equations~(\ref{eqn:sat}), (\ref{eqn:onlyone}), and (\ref{eqn:consistancy}).   
The purpose of Equation~(\ref{eqn:sat}) is to ensure satisfiability of all the clauses. 
Because of Equation~(\ref{eqn:onlyone}), for each clause $C$ and for each block $X_i$, we select only one assignment. Notice, 
that, so far it is allowed to choose many assignments from a block $X_i$, for different clauses. To ensure the consistency 
of assignments in each block across clauses, we added a system of constraints (Equation~(\ref{eqn:consistancy})). Equation~(\ref{eqn:consistancy}) 
ensures the consistency of assignments in the adjacent clauses (in the order $C_1,\ldots,C_m$).  Thus, the consistency of assignments 
propagates in a  sequential manner. Notice that number constraints defined by  Equations~(\ref{eqn:sat}), (\ref{eqn:onlyone}), and (\ref{eqn:consistancy}) are $m$, $m\cdot c$ and $(m-1)\cdot c$, respectively. The number of variables is $m\cdot c \cdot 2^{\ell+1}$. 
Also notice that all the coefficients in  Equations~(\ref{eqn:sat}), (\ref{eqn:onlyone}) and (\ref{eqn:consistancy}) are non-negative. 
This implies that $\cm$ is non-negative and has dimension $\OO(m)\times \OO(m 2^{\frac{n}{c}})$. Thus, the property $(b.)$ of Lemma~\ref{lemtechnicalintro} is satisfied. The largest entry in $\tv$ is $L-1= 2^{\lceil \frac{n}{c}\rceil}-1$ (see Equation~(\ref{eqn:consistancy})) and hence the property $(d.)$ of Lemma~\ref{lemtechnicalintro} is satisfied. Now we prove property $(a.)$ of  Lemma~\ref{lemtechnicalintro}.  

\begin{center}
\textbf{
 we simplify the notation by using $A$ instead of $\cm $ and $b$ instead of  $ \tv$.
 }
\end{center}

 \begin{lemma}
\label{lemma:brachwidthcorrect}
 Formula  $\psi$ is satisfiable if and only if there exists $y^*\in \Z^{n'}$ such that $A y^*=b$.
 where $n'= m\cdot c \cdot 2^{\ell+1}$, the number of columns in $A$. 
 \end{lemma}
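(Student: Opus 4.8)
The plan is to prove the two directions of the equivalence separately, in each case making the correspondence between satisfying assignments of $\psi$ and feasible $0/1$-solutions of $Ax=b$ completely explicit. For the forward direction, suppose $\phi$ is a satisfying assignment of $\psi$. Restricting $\phi$ to each block $X_{\ii}$ gives a partial assignment $\phi_{j(\ii)}(X_{\ii})$ for some index $j(\ii)\in\ZZ{L}$. For every clause $C_r$ and every block $X_{\ii}$ I will set exactly one coordinate of $x^*$ to $1$ among the two columns associated with the tuple $(C_r,\phi_{j(\ii)}(X_{\ii}))$: namely the ``second'' column (index $(r-1)c\cdot 2^{\ell+1}+\ii\cdot 2^{\ell+1}+2j(\ii)+2$) if $\phi_{j(\ii)}(X_{\ii})$ satisfies $C_r$ \emph{and} is the lexicographically-first such block for $C_r$, and the ``first'' column otherwise. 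All other coordinates of $x^*$ are $0$. Then I verify $Ax^*=b$ row by row: the selector rows (the $B_r[2c+2\ii+3,\cdot]$ rows, which have all-$1$ entries on the $2^{\ell+1}$ columns of $P_{r,\ii}$ and $0$ elsewhere) sum to $1$, matching $b[q]=1$; the evaluation row $B_r[2c+1,\cdot]$ sums to $1$ because exactly one chosen column in that clause-block is a ``second'' column of a satisfying partial assignment, matching $b[q]=1$; and each shared matching row $\h$ between $B_{r-1}$'s successor part and $B_r$'s predecessor part receives $L-1-j(\ii)$ from the column chosen in $(C_{r-1},X_{\ii})$ and $j(\ii)$ from the column chosen in $(C_r,X_{\ii})$, summing to $L-1=b[\h]$ because the \emph{same} block-index $j(\ii)$ is used in every clause. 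The first and last matching rows (in $B_1$, $B_m$) are checked analogously using \eqref{eqn:B1:postassignement}--\eqref{eqn:B1:selection} and \eqref{eqn:Bm:selection}.

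For the backward direction, suppose $x^*\in\Z^{n'}$ with $Ax^*=b$. First I argue $x^*\in\{0,1\}^{n'}$: each column of $A$ has a $1$ in some selector row $B_r[2c+2\ii+3,\cdot]$ (resp.\ $B_m[2c+\ii+2,\cdot]$ or $B_1[2\ii+3,\cdot]$), these rows have \emph{only} $0/1$ entries, are supported exactly on $P_{r,\ii}$, and the corresponding $b$-entry is $1$; since all entries are non-negative integers this forces exactly one column of $P_{r,\ii}$ to be selected with multiplier $1$ and the rest $0$. Hence for each pair $(C_r,X_{\ii})$ the solution picks exactly one column, i.e.\ exactly one tuple $(C_r,\phi_{j_{r,\ii}}(X_{\ii}))$ and one of its two columns. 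Next I use the matching rows to show the chosen index is the same across clauses: the shared row $\h$ between $(C_{r-1},X_{\ii})$ and $(C_r,X_{\ii})$ forces $(L-1-j_{r-1,\ii})+j_{r,\ii}=L-1$, i.e.\ $j_{r-1,\ii}=j_{r,\ii}$; iterating over $r$ gives a single index $j_{\ii}$ per block, hence a well-defined global assignment $\phi$ obtained by gluing $\phi_{j_{\ii}}(X_{\ii})$ over all blocks. Finally, the evaluation row for clause $C_r$ has sum $1=b[q]$; the ``first'' columns contribute $0$ there by \eqref{eqn:Br:evaluation0}, so the $1$ must come from a ``second'' column, which by \eqref{eqn:Br:evaluation} is only allowed to be nonzero (and then equals $1$) when $\phi_{j_{\ii}}(X_{\ii})$ satisfies $C_r$; hence some block's partial assignment satisfies $C_r$, so $\phi$ satisfies $C_r$. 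As $r$ was arbitrary, $\phi$ satisfies $\psi$.

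I expect the main obstacle to be the bookkeeping rather than any conceptual difficulty: one must track carefully which rows of the global matrix $A$ coincide (the identification of row $2\ii+1$ of $B_r$ with row $2c+2\ii+2$ of $B_{r-1}$, and the membership in $P$), handle the boundary matrices $B_1$ and $B_m$ whose row structure differs, and be careful that the ``two columns per assignment'' trick does exactly what is needed — it lets multiple blocks legitimately satisfy $C_r$ while the evaluation-row sum stays pinned to $1$. A secondary point requiring care is confirming that zeros placed outside the $B_r$ blocks in the selector rows (as flagged in the construction) do not introduce extra nonzero contributions, so that the selector-row argument forcing $0/1$ solutions is airtight.
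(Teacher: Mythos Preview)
Your proposal is correct and, for the forward direction, follows essentially the same route as the paper: from a satisfying assignment $\phi$ one extracts the block-indices $j(\ii)$, selects for each pair $(C_r,X_{\ii})$ one of the two columns of $(C_r,\phi_{j(\ii)}(X_{\ii}))$---the ``second'' one for exactly one satisfying block per clause---and then verifies $Ax^*=b$ row by row via the selector, evaluation, and predecessor/successor matching rows. The only cosmetic difference is that the paper allows an \emph{arbitrary} choice function $\alpha:\CC\to\ZZ{c}$ picking a satisfying block for each clause, whereas you fix $\alpha$ to be the lexicographically first such block; either works.

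Where your proposal differs is that you also spell out the backward direction, which the paper's written proof omits entirely (the paper's proof of this lemma stops after verifying $Ax^*=b$ in the forward direction). Your backward argument---use the selector rows (whose support is exactly $P_{r,\ii}$, with all-$1$ entries there and $b$-value $1$) to force a $0/1$ solution selecting one column per $P_{r,\ii}$; use the shared matching row to get $(L-1-j_{r-1,\ii})+j_{r,\ii}=L-1$ and hence a single $j_{\ii}$ across all clauses; then read off satisfaction of each $C_r$ from the evaluation row via \eqref{eqn:Br:evaluation0}--\eqref{eqn:Br:evaluation}---is the intended one and is sound. The only caution is exactly the one you already flagged: when invoking the matching-row equation you should explicitly note (via \eqref{eqn:Br:zero1}, \eqref{eqn:Br:zero2}, \eqref{eqn:Bm:zero1}) that this row is zero on all columns outside $P_{r-1,\ii}\cup P_{r,\ii}$, so the two selected columns are the \emph{only} contributors.
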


\begin{proof}
Let $Y=\{y_{C,i,a}~|~C\in \CC,i\in \ZZ{c},a\in \ZZ{2L}\}$. 
Suppose $\psi$ is satisfiable.  
We need to show that there is an assignment of non-negative integer values to the variables 
in $Y$ such that 
Equations~(\ref{eqn:sat}), (\ref{eqn:onlyone}) and (\ref{eqn:consistancy}) are 
satisfied. 
Let $\phi$ be a satisfying assignment of $\psi$. Then, 
there exist $a_0,a_1,\ldots,a_{c-1}\in \ZZ{L}$ such that 
 $\phi$ is the union of $\phi_{a_0}(X_0),\phi_{a_1}(X_1),\ldots,\phi_{a_{c-1}}(X_{c-1})$. 
Any clause $C\in \CC$  is satisfied by at least one of the assignments 
 $\phi_{a_0}(X_0),\phi_{a_1}(X_1),\ldots,\phi_{a_{c-1}}(X_{c-1})$. For each 
 $C$, we fix an arbitrary $\ii \in \ZZ{c}$ such that 
 the assignment $\phi_{a_{\ii}}(X_{\ii})$ satisfies clause $C$. 
 Let $\alpha$ be a function which fixes these assignments for each clause. 
 That is, $\alpha: \CC \rightarrow \ZZ{c}$ such that the assignment $\phi_{a_{\alpha(C)}}(X_{\alpha(C)})$ 
 satisfies the clause $C$ for every $C\in\CC$. 
 Now we assign  values to $Y$ and 
 prove that these assignment satisfy  
Equations~(\ref{eqn:sat}), (\ref{eqn:onlyone}) and (\ref{eqn:consistancy}). 

\begin{equation}
y_{C,i,a} =
\left\{ \begin{array}{ll}
1, & \mbox{if }  \alpha(C)=i \mbox{ and } a \mbox{ is even and } \lfloor \frac{a}{2}\rfloor=a_i\\ 
1, & \mbox{if }  \alpha(C)\neq i \mbox{ and } a \mbox{ is odd and } \lfloor \frac{a}{2}\rfloor=a_i\\
0, & \mbox{otherwise.}
\end{array}\right.\label{eqn:assignment}
\end{equation}
Notice that, by Equation~(\ref{eqn:assignment}), for any fixed $C\in \CC$, exactly $c$ variables from 
$\{y_{C,i,a}~|~i\in \ZZ{c},a\in [2^{\ell+1}]\}$ is set to $1$. 
They are $y_{C,\alpha(C),2a_{\alpha(C)}}$ and the variables in the set $Y_C=\{y_{C,i,2a_i+1}~|~i\neq \alpha(C)\}$. 
This implies that in Equation~(\ref{eqn:sat}), only variable is set to $1$, and hence Equation~(\ref{eqn:sat}) is 
satisfied. 
Now consider Equation~(\ref{eqn:onlyone}) for any fixed $C\in \CC$ and $i\in \ZZ{c}$. By equation~(\ref{eqn:assignment}), 
exactly one variable from $\{y_{C,i,a}~|~a\in \ZZ{2L}\}$ is set to $1$, and hence Equation~(\ref{eqn:onlyone}) is satisfied. 
Now consider Equation~(\ref{eqn:consistancy}) for fixed $r\in [m-1]$ and $i\in \ZZ{c}$. By Equation~(\ref{eqn:assignment}), 
exactly one variable from each set $\{y_{C_r,i,a}~|~a\in \ZZ{2L}\}$ and $\{y_{C_{r+1},i,a}~|~a\in \ZZ{2L}\}$ 
are set to $1$; they are one variable each from $\{y_{C_r,i, 2a_i},y_{C_r,i, 2a_i+1}\}$ and $\{y_{C_{r+1},i, 2a_i},y_{C_{r+1},i, 2a_i+1}\}$. 
So we get the following when we substitute values for $Y$ in  Equation~(\ref{eqn:consistancy}). 
\begin{eqnarray*}
\sum_{\substack{a \in \ZZ{2L} }} \left(\lfloor \frac{a}{2}\rfloor\cdot y_{C_r,i,a}\right)   + \left((L-1-\lfloor \frac{a}{2}\rfloor)\cdot y_{C_{r+1},i,a}\right) = a_i + L-1 - a_i = L-1 
\end{eqnarray*}
Hence, Equation~(\ref{eqn:consistancy}) is satisfied by the assignments given in Equation~(\ref{eqn:assignment}). 

Now we need to prove the converse direction.  Suppose there are non-negative integer assignments to $Y$ such that 
Equations~(\ref{eqn:sat}), (\ref{eqn:onlyone}) and (\ref{eqn:consistancy}) are satisfied. Now we need to show that 
$\psi$ is satisfiable. Because of Equation~(\ref{eqn:onlyone}) all the variables in $Y$ are set to $0$ or $1$. 
We will extract a satisfying assignment from the  values assigned to variables in $Y$. Towards that, first we prove the following claim. 
\begin{claim}\label{claim:extractsat}
Let $y_{C_1,i,a}=1$ for some $i\in \ZZ{c}$ 
and $a\in \ZZ{2L}$. 
Then, for any  $C'\in \CC$,  exactly one among $\{y_{C',i,2\lfloor \frac{a}{2}\rfloor}, y_{C',i,2\lfloor \frac{a}{2}\rfloor+1}\}$ is 
set to $1$. 
\end{claim} 
\begin{proof} 
Towards the proof, we first show that if $y_{C_r,i,a}=1$ for some $r\in [m-1]$, 
then  exactly one among $\{y_{C_{r+1},i,2\lfloor \frac{a}{2}\rfloor}, y_{C_{r+1},i,2\lfloor \frac{a}{2}\rfloor+1}\}$ is set to $1$. 
%
 By Equation~(\ref{eqn:onlyone}) and the fact that  $y_{C_r,i,a}=1$, 
we get that 
\begin{equation}
\sum_{\substack{a' \in \ZZ{2L} }} \left(\lfloor \frac{a'}{2}\rfloor\cdot y_{C_r,i,a'}\right)=\lfloor \frac{a}{2}\rfloor.\label{eqn:aby2}
\end{equation} 
Equations~(\ref{eqn:consistancy}) and (\ref{eqn:aby2}) implies that   

\begin{equation}
\sum_{\substack{a' \in \ZZ{2L} }}  \left((L-1-\lfloor \frac{a'}{2}\rfloor)\cdot y_{C_{r+1},i,a'}\right)= L-1- \lfloor \frac{a}{2}\rfloor. 
\label{eqn:aby2second}
\end{equation}

By Equations~(\ref{eqn:onlyone}) and (\ref{eqn:aby2second}), 
we get that exactly one among $\{y_{C_{r+1},i,2\lfloor \frac{a}{2}\rfloor}, y_{C_{r+1},i,2\lfloor \frac{a}{2}\rfloor+1}\}$ is 
set to $1$. 
Thus, by applying the above arguments for $i=1,2,\ldots,m-1$, we get that for any 
 $C'\in \CC\setminus \{C_1\}$,  exactly one among $\{y_{C',i,2\lfloor \frac{a}{2}\rfloor}, y_{C',i,2\lfloor \frac{a}{2}\rfloor+1}\}$ is 
set to $1$. 

Suppose $C'=C_1$. Then, by Equation~(\ref{eqn:onlyone}) and the assumption that $y_{C_1,i,a}=1$, exactly one among $\{y_{C_1,i,2\lfloor \frac{a}{2}\rfloor}, y_{C_1,i,2\lfloor \frac{a}{2}\rfloor+1}\}$ is set to $1$.
\end{proof}
Now we define a satisfying assignment for $\psi$. Towards that we give assignments for each blocks $X_0,\ldots,X_{c-1}$, such that 
the union of these assignments satisfies $\psi$. Fix any block $X_i$. By Equation~(\ref{eqn:onlyone}), exactly one among 
$\{y_{C_1,i,a}~|~a\in \ZZ{2L}\}$ 
is set to $1$. Let $a_i\in \ZZ{2L}$ such that $y_{C_1,i,a_i}=1$. Then we choose the assignment 
$\phi_{\lfloor \frac{a_i}{2}\rfloor}(X_i)$ for $X_i$.  Let $\phi$ be the assignment of $X$ which is the union of  $\psi_{\lfloor \frac{a_1}{2}\rfloor}(X_1)$, 
$\psi_{\lfloor \frac{a_2}{2}\rfloor}(X_2)$,\ldots,$\psi_{\lfloor \frac{a_{c-1}}{2}\rfloor}(X_{c-1})$.  By Equation~(\ref{eqn:sat}) 
and Claim~\ref{claim:extractsat}, $\phi$ satisfies all the clauses in $\CC$ and hence $\psi$ is satisfiable. 
\end{proof}

Now we need to prove property $(c.)$ of  Lemma~\ref{lemtechnicalintro}.   That is the \pw of 
$A$ is at most $c+4$. Towards that we need to understand the structure of matrix $A$. We decompose 
the matrix $A$ into $m$ disjoint submatrices $B_1,\ldots B_m$ which are disjoint and cover all the non-zero entries in the matrix $A$. 
First we define some notations and fix the column indices  of $A$ corresponding the the variables in the 
constraints. 
Let  $Y$ denote the set $\{y_{C,i,a}~|~C\in \CC,i\in \ZZ{c},a\in \ZZ{2L}\}$ of variables in the constraints defined by  
Equations~(\ref{eqn:sat}), (\ref{eqn:onlyone}) and (\ref{eqn:consistancy}). 
These variables  can be partitioned into $\biguplus_{C\in \CC}Y_C$, where 
$Y_C=\{y_{C,i,a}~|~i\in \ZZ{c},a\in \ZZ{2L}\}$. 
Further for each $C\in \CC$, $Y_C$ can be partitioned into $\bigcup_{i\in{\ZZ{c}}}Y_{C,i}$, 
where $Y_{C,i}=\{y_{C,i,a}~|~a\in \ZZ{2L}\}$. 
The set of columns indexed by $[r\cdot c\dot 2^{\ell+1}]\setminus [(r-1)\cdot c \cdot 2^{\ell+1}]$, for any 
$r\in[m]$, corresponds to the set of variables in $Y_{C_r}$. Among the set of columns corresponding to 
$Y_C$, the first $2^{\ell+1}$ columns corresponds to the variables in $Y_{C,1}$, second 
$2^{\ell+1}$ columns corresponds to the variables in $Y_{C,2}$, and so on. Among the set of columns 
corresponds to $Y_{C,i}$ for any $C\in \CC$ and $i\in \ZZ{c}$, the first two columns corresponds to the variable  
$y_{C,i,0}$ and $y_{C,i,1}$, and second two columns corresponds to the variables 
$y_{C,i,2}$ and $y_{C,i,3}$, and so on.

Now we move to the description of $B_j$, $j\in [m]$. The matrix 
$B_j$ will cover the coefficients of  $Y_{C_j}$ in Equations~(\ref{eqn:sat}), (\ref{eqn:onlyone}) and (\ref{eqn:consistancy}). 
In other words $B_j$ covers the non-zero entries in the columns corresponding to $Y_{C_j}$, i.e, in the columns of 
$A$ indexed by $[j \cdot c \cdot 2^{\ell+1}]\setminus [(j-1) \cdot c \cdot 2^{\ell+1}]$.   
Now we explain these submatrices. 
Each matrix $B_j$ has $c\cdot 2^{\ell+1}$ columns; each of them corresponds to a variable in 
$Y_{C_j}$. Each row in $A$ corresponds to a constraint in the system of equations defined by Equations~(\ref{eqn:sat}), (\ref{eqn:onlyone}) and (\ref{eqn:consistancy}). So we use notations $f(C_1),\ldots f(C_m)$ to represents the constraints defined by 
Equations~(\ref{eqn:sat}). Similarly we use notations  $\{s(C,i)~|~~C\in \CC,i\in \ZZ{c}\}$ and $\{t(C,i)~|~~C\in \CC,i\in \ZZ{c}\}$ 
to represents the constraints defined by Equations~(\ref{eqn:onlyone}) and (\ref{eqn:consistancy}), respectively. 
%
%



\medskip\noindent\emph{Matrix $B_1$.}
Matrix $B_1$ is of dimension $(2c+1)\times (c\cdot 2^{\ell+1})$.   
In the first row of $B_1$, we have coefficients of $Y_{C_1}$ from 
$f(C_1)$.
For $j\in [c]$, the rows indexed by $j+1$ and $c+j+1$ are defined as follows.   
In the $(j+1)^{st}$ row of $B_1$, we have coefficients of $Y_{C_1}$ from $s(C_1,j)$ while 
in the $(c+j+1)^{st}$ row of $B_1$, we have coefficients of $Y_{C_1}$ from $t(C_1,j)$. 
That is the entries of $B_1$ are as follows, where $i\in \ZZ{c}$ and $a\in \ZZ{L}$.  
\begin{eqnarray}
&&B_1[1, {\ii}\cdot 2^{\ell+1}+2a+1]=
\left\{ \begin{array}{ll}
1 & \mbox{if } \phi_{a}(X_{\ii}) \mbox{ satisfies } C_1,\\
0 & \mbox{otherwise.}
\end{array}\right. 
 \label{eqn:B1:evaluation}\\
&&B_1[1, {\ii}\cdot 2^{\ell+1}+2a+2]= 0, \text{ and}  \label{eqn:B1:evaluation0}\\
&&B_1[2+{\ii}, {\ii}\cdot 2^{\ell+1}+2a+1]=  B_1[2+{\ii}, {\ii}\cdot 2^{\ell+1}+2a+2] = 1,\label{eqn:B1:selection}\\
&& B_1[c+2+{\ii}, {\ii}\cdot 2^{\ell+1}+2a+1]= B_1[c+2+{\ii}, {\ii}\cdot 2^{\ell+1}+2a+2] =a, \label{eqn:B1:postassignement}
\end{eqnarray}

Here, Equations~(\ref{eqn:B1:evaluation}) and (\ref{eqn:B1:evaluation0}), follows from Equation~(\ref{eqn:sat}).  
Equations~(\ref{eqn:B1:selection}) and (\ref{eqn:B1:postassignement}) follows from Equation~(\ref{eqn:onlyone}) 
and (\ref{eqn:consistancy}), respectively. 
All other entries in $B_1$ are zeros. 
That is,  for all $\ii,\ip \in \ZZ{c}$ and $g\in [2^{\ell+1}]$ such that $\ii \neq \ip$, 
\begin{eqnarray}
&& B_1[2+\ii, {\ip}\cdot 2^{\ell+1}+g]=B_1[c+2+\ii, {\ip}\cdot 2^{\ell+1}+g]  =0, \label{eqn:B1:zero}
\end{eqnarray}

This completes the definition of $B_1$. By their role in the reduction, 
the matrix $B_1$ is partitioned in to three parts.  The first row is called the {\em evaluation part}  of $B_1$. 
The part composed of rows indexed by $2,3,\ldots,c+1$ is called {\em selection part} and 
the  part composed of last $c$ rows is called {\em successor matching part} (See Figure~\ref{fig:Br:portions}).

\medskip\noindent\emph{Matrices $B_r$ for  $1<r<m$.} 
Matrix $B_r$ is of dimension $(3c+1)\times (c\cdot 2^{\ell+1})$.  
The first $c$ rows are defined by Equation~(\ref{eqn:consistancy}).
For $j\in [c]$, in $i^{th}$ row, we have coefficients of $Y_{C_r}$ from 
$t(C_{r-1},i)$.    
In the $(c+1)^{st}$  row of $B_r$, we have coefficients of $Y_{C_r}$ from 
$f(C_r)$.
For $i\in [c]$, the rows indexed by $c+1+i$ and $2c+1+i$ are defined as follows.   
In the $(c+1+i)^{th}$ row of $B_r$, we have coefficients of $Y_{C_r}$ from $s(C_r,i)$ while 
in the $(2c+1+i)^{th}$ row of $B_r$, we have coefficients of $Y_{C_r}$ from $t(C_r,i)$.  
This completes the definition of $B_r$. By their role in the reduction, 
the matrix $B_r$ is partitioned in to four parts.  
The part composed of the first $c$ rows is called the {\em predecessor matching part}.
The part composed of the row indexed by $c+1$ is called the {\em evaluation part}  of $B_1$. 
The part composed of rows indexed by $c+2,c+3,\ldots,2c+1$ is called {\em selection part} and 
the  part composed of last $c$ rows is called {\em successor matching part} (For illustration see Fig.~\ref{fig:Br:portionsA}).
That is the entries of $B_1$ are  as follows, where $i\in \ZZ{c}$ and $a\in \ZZ{L}$.

The predecessor matching part is defined by 
\begin{equation}
B_r[{\ii}+1, {\ii}\cdot 2^{\ell+1}+2a+1]=B_r[{\ii}+1, {\ii}\cdot 2^{\ell+1}+2a+2]=L-1-a.  \label{eqn:Br:preassignement}
\end{equation}

The evaluation part is  defined by 
\begin{equation}
B_r[c+1, {\ii}\cdot 2^{\ell+1}+2a+2]=0,  \label{eqn:Br:evaluation0}
\end{equation}
and 
\begin{equation}
B_r[c+1, {\ii}\cdot 2^{\ell+1}+2a+1]=
\left\{ \begin{array}{ll}
1, & \mbox{if } \phi_{a}(X_{\ii}) \mbox{ satisfies } C_r,\\
0, & \mbox{otherwise.}
\end{array}\right. \label{eqn:Br:evaluation}
\end{equation}
The selection part for $B_r$ is defined  as 
\begin{eqnarray}
&&B_r[c+2+{\ii}, {\ii}\cdot 2^{\ell+1}+2a+1]=B_r[c+2+{\ii}, {\ii}\cdot 2^{\ell+1}+2a+2]=1,  \label{eqn:Br:selection}
\end{eqnarray}
The successor matching part for $B_r$ is defined as 
\begin{eqnarray}
&&B_r[2c+2+{\ii}, {\ii}\cdot 2^{\ell+1}+2a+1]=B_r[2c+2+{\ii}, {\ii}\cdot 2^{\ell+1}+2a+2]=j,  \label{eqn:Br:postssignement}
\end{eqnarray}

All other entries in $B_r$, which are not listed above, are  zero. 
That is,  for all $\ii,\ip \in \ZZ{c}$ and $g\in [2^{\ell+1}]$ such that $\ii \neq \ip$, 
\begin{eqnarray}
&&B_r[{\ii}+1, {\ip}\cdot 2^{\ell+1}+g]=0 \label{eqn:Br:zero1}, \\ 
&&B_r[c+2+\ii, {\ip}\cdot 2^{\ell+1}+g]=0,  \text{ and } \label{eqn:Br:zeroX}\\
&&B_r[2c+2+\ii, {\ip}\cdot 2^{\ell+1}+g]=0. \label{eqn:Br:zero2}
\end{eqnarray}
For an example, see Figure~\ref{fig:Br}.


\begin{figure}[t]
\centering
\begin{subfigure}[b]{0.3\textwidth}
\(
  \left(\begin{array}{ccccc}
     \multicolumn{5}{c}{\cellcolor{gray!50}\scalebox{1}{evaluation part}}\\\hline
     \multicolumn{5}{c}{\cellcolor{blue!50}} \\
     \multicolumn{5}{c}{\cellcolor{blue!50}\scalebox{1}{selection part}}\\
     \multicolumn{5}{c}{\cellcolor{blue!50}} \\\hline
     \multicolumn{5}{c}{\cellcolor{green!50}} \\
     \multicolumn{5}{c}{\cellcolor{green!50}\scalebox{1}{successor matching part}}\\
     \multicolumn{5}{c}{\cellcolor{green!50}} \\\hline
  \end{array}\right)
\)
\caption{Parts of $B_1$.}
\label{fig:Br:portions}
\end{subfigure}
\hspace{0.4cm}
\begin{subfigure}[b]{0.3\textwidth}
\(
  \left(\begin{array}{ccccc}
     \multicolumn{5}{c}{\cellcolor{yellow!50}} \\
     \multicolumn{5}{c}{\cellcolor{yellow!50}\scalebox{1}{predecessor matching part}}\\
     \multicolumn{5}{c}{\cellcolor{yellow!50}} \\\hline
     \multicolumn{5}{c}{\cellcolor{gray!50}\scalebox{1}{evaluation part}}\\\hline
     \multicolumn{5}{c}{\cellcolor{blue!50}} \\
     \multicolumn{5}{c}{\cellcolor{blue!50}\scalebox{1}{selection part}}\\
     \multicolumn{5}{c}{\cellcolor{blue!50}} \\\hline
     \multicolumn{5}{c}{\cellcolor{green!50}} \\
     \multicolumn{5}{c}{\cellcolor{green!50}\scalebox{1}{successor matching part}}\\
     \multicolumn{5}{c}{\cellcolor{green!50}} \\\hline
  \end{array}\right)
\)
\caption{Parts  of  $B_r$ for $1<r\leq m$.}
\label{fig:Br:portionsA}
\end{subfigure}
\hspace{0.5cm}
\begin{subfigure}[b]{0.3\textwidth}
\(
  \left(\begin{array}{cccccc}
     \multicolumn{6}{c}{\cellcolor{gray!50}\scalebox{1}{evaluation part}}\\\hline
     \multicolumn{6}{c}{\cellcolor{blue!50}} \\
     \multicolumn{6}{c}{\cellcolor{blue!50}\scalebox{1}{selection part}}\\
     \multicolumn{6}{c}{\cellcolor{blue!50}} \\\hline
     \end{array}\right)
\)
\caption{Parts of $B_m$.}
\label{fig:Br:portions}
\end{subfigure}
\caption{Parts  of  $B_r$.}
\label{fig:Bm}
\end{figure}

\medskip\noindent\emph{Matrices $B_m$.} 
Matrix $B_m$ is of dimension $(2c+1)\times (c\cdot 2^{\ell+1})$.  
For $j\in [c]$, in $i^{the}$ row, we have coefficients of $Y_{C_m}$ from 
$t(C_{m-1},i)$.    
In the $(c+1)^{st}$  row of $B_r$, we have coefficients of $Y_{C_m}$ from 
$f(C_m)$.
In the $(c+1+i)^{th}$ row of $B_m$, we have coefficients of $Y_r$ from $s(C_m,i)$.
That is $B_m$ is obtained by deleting the successor matching part from the construction of 
$B_r$ above. 
The entries of  $B_m$ are as follows, where $i\in \ZZ{c}$ and $a\in \ZZ{L}$.  
\begin{eqnarray}
&&B_m[{\ii}+1, {\ii}\cdot 2^{\ell+1}+2a+1]=B_m[{\ii}+1, {\ii}\cdot 2^{\ell+1}+2a+2]=L-1-a, \nonumber\\
&& B_m[c+1, {\ii}\cdot 2^{\ell+1}+2a+2]= 0, \text{ and } \nonumber\\
&& B_m[c+1, {\ii}\cdot 2^{\ell+1}+2a+2]=
\left\{ \begin{array}{ll}
1, & \mbox{if } \phi_{a}(X_{\ii}) \mbox{ satisfies } C_m,\\
0, & \mbox{otherwise.}
\end{array}\right. \nonumber\\
&& B_m[c+2+{\ii}, {\ii}\cdot 2^{\ell+1}+2a+1]=  B_m[c+2+{\ii}, {\ii}\cdot 2^{\ell+1}+2a+2]= 1, \label{eqn:Bm:selection} 
\end{eqnarray}

All other entries in $B_m$ are zeros. 
That is,  for all $\ii,\ip \in \ZZ{c}$ and $g\in [2^{\ell+1}]$ such that $\ii \neq \ip$, 
\begin{eqnarray}
&& B_m[1+{\ii}, {\ip}\cdot 2^{\ell+1}+g] =0 \label{eqn:Bm:zero1}\\
&& B_m[c+2+{\ii}, {\ip}\cdot 2^{\ell+1}+g] = 0, \label {eqn:Bm:zero3}\\
&& B_m[2c+2+\ii, {\ip}\cdot 2^{\ell+1}+g] = 0. \label{eqn:Bm:zero2}
\end{eqnarray}

%
%
%
%
%

\begin{figure}[t]
\scalefont{0.8}{
\begin{equation*}
 \left[ \begin{array}{cccccccccccccccc}
  \rowcolor{yellow!50}
0& 0& \color{red} 1& \color{red} 1& 2& 2& 3& 3& \color{gray} 0& \color{gray} 0& \color{gray} 0& \color{gray} 0& \color{gray} 0& \color{gray} 0& \color{gray} 0& \color{gray} 0 \\
\rowcolor{yellow!50}
\color{gray}  0& \color{gray} 0& \color{gray} 0& \color{gray} 0& \color{gray} 0& \color{gray} 0& \color{gray} 0& \color{gray} 0& 0& 0& 1& 1& 2& 2& \color{blue} 3& \color{blue} 3 \\
\rowcolor{gray!50}
1& 0& \color{red} 0& \color{red} 0& 1& 0& 1& 0& 0& 0& 1& 0& 0& 0& \color{blue} 0& \color{blue} 1 \\
\rowcolor{blue!50}
1& 1& \color{red}1& \color{red}1& 1& 1& 1& 1& 
\color{gray} 0& \color{gray} 0& \color{gray} 0& \color{gray} 0& \color{gray} 0& \color{gray} 0& \color{gray} 0& \color{gray} 0 \\ 
\rowcolor{blue!50}
\color{gray} 0& \color{gray} 0& \color{gray} 0& \color{gray} 0& \color{gray} 0& \color{gray} 0& \color{gray} 0& \color{gray} 0 &
1& 1& 1& 1& 1& 1& \color{blue} 1& \color{blue} 1 \\
\rowcolor{green!50}
3& 3& \color{red}2& \color{red}2& 1& 1& 0 & 0 & 
\color{gray} 0& \color{gray} 0& \color{gray} 0& \color{gray} 0& \color{gray} 0& \color{gray} 0& \color{gray} 0& \color{gray} 0 \\ 
\rowcolor{green!50}
\color{gray} 0& \color{gray} 0& \color{gray} 0& \color{gray} 0& \color{gray} 0& \color{gray} 0& \color{gray} 0& \color{gray} 0 &
3& 3& 2& 2& 1& 1& \color{blue} 0& \color{blue} 0 
\end{array} \right]
\end{equation*}
}

\scalefont{0.8}{
\begin{equation*}
\left[
\begin{array}{cccccccccccccccc}
\rowcolor{yellow!50}
0& 0& \color{red} 1& \color{red} 1& 2& 2& 3& 3& \color{gray} 0& \color{gray} 0& \color{gray} 0& \color{gray} 0& \color{gray} 0& \color{gray} 0& \color{gray} 0& \color{gray} 0 \\ 
\rowcolor{yellow!50}
\rowcolor{yellow!50}
\color{gray}  0& \color{gray} 0& \color{gray} 0& \color{gray} 0& \color{gray} 0& \color{gray} 0& \color{gray} 0& \color{gray} 0& 0& 0& 1& 1& 2& 2& \color{blue} 3& \color{blue} 3 \\
\rowcolor{gray!50}
1& 0& \color{red} 0& \color{red} 0& 1& 0& 1& 0& 0& 0& 1& 0& 0& 0& \color{blue} 1& \color{blue} 0 \\
\rowcolor{blue!50}
1& 1& \color{red}1& \color{red}1& 1& 1& 1& 1& 
\color{gray} 0& \color{gray} 0& \color{gray} 0& \color{gray} 0& \color{gray} 0& \color{gray} 0& \color{gray} 0& \color{gray} 0 \\ 
\rowcolor{blue!50}
\color{gray} 0& \color{gray} 0& \color{gray} 0& \color{gray} 0& \color{gray} 0& \color{gray} 0& \color{gray} 0& \color{gray} 0 &
1& 1& 1& 1& 1& 1& \color{blue} 1& \color{blue} 1 
\end{array}
\right]
\end{equation*}
}

\caption{Let $n=4,c=2,\ell=2$ and $C_r=x_1\vee \overline{x_2}\vee x_4$. The assignments are 
$\phi_0(X_0)=\{x_1=x_2=0\}, \phi_1(X_0)=\{x_1=0, x_2=1\}, \phi_2(X_0)=\{x_1=1, x_2=0\}, 
\phi_3(X_0)=\{x_1=x_2=1\}$, 
$\phi_0(X_1)=\{x_3=x_4=0\}, \phi_1(X_1)=\{x_3=0, x_4=1\}, \phi_2(X_1)=\{x_3=1, x_4=0\}, 
\phi_3(X_1)=\{x_3=x_4=1\}$. The entries defined according to $\phi_1(X_0)$ and $\phi_3(X_1)$ are colored 
red and blue respectively. If $1<r<m$, then the matrix on the left represents $B_r$ 
and if $r=1$, then $B_r$ can be obtained by deleting the yellow colored portion from the top matrix. 
The matrix on the right represents $B_m$. 
}
\label{fig:Br}
\end{figure}

\medskip\noindent\emph{Matrix $A$.}
Now we explain how the matrix $A$ is formed from $B_1,\ldots,B_m$. 
The matrices 
$B_1,\ldots,B_m$  are disjoint submatrices of $A$ and they  cover all non zero entries of $A$. Informally, the submatrices $B_1,\ldots,  B_m$ form  a chain such that 
  the rows corresponding 
to the successor matching part of $B_r$ will be the same as the rows in the   predecessor matching part  of 
$B_{r+1}$ (because of Equation~(\ref{eqn:consistancy}). A pictorial representation of $A$ can be found in Fig.~\ref{fig:A}. 
  Formally, 
let $I_1=[2c+1]$ 
and $I_m=[(m-1)(2c+1)+(c+1)]\setminus [(m-1)(2c+1)-c]$. 
For every $1<r<m$, let $I_r=[r(2c+1)]\setminus [(r-1)(2c+1)-c]$, 
and for  $r\in [m]$, let $J_r=
[r\cdot c\cdot 2^{\ell+1}]\setminus [(r-1)\cdot c\cdot 2^{\ell+1}]$. 
Now for each $r\in [m]$, the matrix $A[I_r,J_r]:=B_r$. 
All other entries of $A$ not belonging to any of  the submatrices  $A[I_r,J_r]$   
are  zero. 




Towards upper bounding the \pw\ of  $A$,  
we start with   some notations. 
We partition the set of columns of $A$ into $m$ parts $J_1,\ldots,J_m$ (we have already defined these sets) with  one part per clause.  
For each $r\in [m]$, $J_r$ is the set of columns associated with 
$Y_{C_r}$. We further divide $J_r$ into $c$ equal parts, one per variable set $Y_{C_r,\ii}$. 
These parts are  
$$P_{r,\ii}=\{(r-1)c\cdot 2^{\ell+1}+ \ii\cdot 2^{\ell+1}+1,\ldots, (r-1)c\cdot 2^{\ell+1}+ (\ii+1)\cdot 2^{\ell+1}\}, \,\ii\in \ZZ{c}.$$
In other words,  $P_{r,\ii}$ is the
set of columns corresponding to $Y_{C_r,\ii}$ and $\vert P_{r,\ii}\vert=2^{\ell+1}$. 
 We also put  $n'= m\cdot c \cdot 2^{\ell+1}$ to be  the number of columns in $A$. 


\begin{lemma} 
\label{lemma:pathwidthbounded}
The \pw of the column matroid of $A$ is at most $c+4$
\end{lemma}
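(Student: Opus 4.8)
The plan is to exhibit an explicit column ordering of $A$ witnessing small pathwidth, namely the natural left-to-right ordering in which the columns of $B_1$ come first, then those of $B_2$, and so on, and within each block $B_r$ the columns are ordered by $(\mathbf{i},j)$ as in the construction. For a column index $i$ between $1$ and $n'-1$, I need to bound $\dim S(A, \{1,\dots,i\})$, the dimension of the intersection of the span of the first $i$ columns with the span of the last $n'-i$ columns; by the discussion preceding the lemma, $\mathsf{pw}(M(A)) \le 1 + \max_i \dim S(A,\{1,\dots,i\})$, so it suffices to show this intersection always has dimension at most $c+3$.

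The key observation is that the matrix is ``block-banded'': the only rows of $A$ that carry nonzero entries in both $J_r$ (columns of $B_r$) and $J_{r+1}$ (columns of $B_{r+1}$) are the $2c+1$ rows forming the shared boundary — i.e., the successor matching part of $B_r$, which is identical (as rows of $A$) to the predecessor matching part of $B_{r+1}$. No row is nonzero in both $J_r$ and $J_{r'}$ for $|r-r'|\ge 2$. Consequently, fix $i$ and let $r$ be the block such that column $i$ lies among the columns of $B_r$ (so $J_r$ is ``split'' by the cut). Write $X = \{1,\dots,i\}$. Any vector in $\operatorname{span}(A|X)$ is supported on the row-set $I_1\cup\dots\cup I_r$, and any vector in $\operatorname{span}(A|(E\setminus X))$ is supported on $I_r\cup I_{r+1}\cup\dots\cup I_m$; hence a vector in the intersection $S(A,X)$ is supported on $I_r$ together with, on one side, only rows already ``closed off'' and, on the other, only rows not yet ``opened''. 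More carefully, a common vector $v$ lies in the span of the columns of $B_r$ that are $\le i$ AND in the span of all columns $> i$; but the columns $>i$ that touch rows of $I_{r-1}$ are only the few boundary rows shared between $B_{r-1}$ and $B_r$, and similarly on the forward side. I would formalize this by projecting onto row-coordinates: $S(A,X)$ injects into the coordinate subspace indexed by $I_{r-1}\cap I_r$ together with a bounded number of additional coordinates coming from within $B_r$'s own split.

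The cleanest way to finish is to bound $\dim S(A,X)$ by the number of rows that ``straddle'' the cut at $i$. When the cut falls strictly inside block $B_r$, these straddling rows are: the $2c$ rows of the predecessor matching part of $B_r$ (shared with $B_{r-1}$), the evaluation row of $B_r$, and the $2c$ rows of the successor matching part of $B_r$ (shared with $B_{r+1}$) — but in fact, because of the very sparse structure of these matching parts (each block $X_{\mathbf i}$ uses a single designated row in the predecessor part and one in the successor part for the ``assignment'' value, plus one ``selection'' row, and within the block $B_r$ the relevant columns for a fixed $\mathbf i$ occupy a contiguous interval), only $O(1)$ rows per side actually interact across the specific cut point. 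I would verify, using equations \eqref{eqn:Br:preassignement}--\eqref{eqn:Br:zero2} (and the analogues for $B_1$, $B_m$), that at any cut the span-intersection is spanned by at most $c+3$ coordinate directions: essentially one direction for each of the at most $c$ earlier blocks whose successor-part value is ``half-committed'', plus a constant number ($\le 3$) for the block currently being split (its predecessor-value coordinate, its evaluation coordinate, its successor/selection coordinates). Then $\mathsf{pw} \le (c+3)+1 = c+4$.

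The main obstacle I anticipate is the bookkeeping at cuts that fall \emph{in the middle of a block $B_r$}, and in particular in the middle of the $2\ell+1$-column group belonging to a single pair $(C_r, X_{\mathbf i})$: there one must check that splitting a single ``assignment gadget'' does not create extra intersection dimensions beyond the $O(1)$ budget, which requires looking carefully at the two columns per assignment and the rows \eqref{eqn:Br:preassignement}, \eqref{eqn:Br:postssignement}, \eqref{eqn:Br:selection}. The boundary cases $r=1$ and $r=m$, where the predecessor resp.\ successor matching part is absent (or collapsed), need to be handled separately but are strictly easier. Once the per-cut bound of $c+3$ on the intersection dimension is established uniformly, the caterpillar/ordering characterization of \pw stated in the preliminaries immediately gives $\mathsf{pw}(M(A)) \le c+4$.
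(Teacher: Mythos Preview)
Your proposal is correct and follows essentially the same approach as the paper: use the natural left-to-right column ordering, observe that $\dim S(A,\{1,\dots,j\})$ is bounded by the number of rows carrying a nonzero entry on both sides of the cut, and then count those ``straddling'' rows by exploiting the block-banded structure of $A$. The paper formalizes the count by partitioning $I_r$ into five pieces $R_1,R_1',R,R_2,R_2'$ and bounding $|I'\cap R_1|\le c-(i-1)$, $|I'\cap R_1'|=0$, $|I'\cap R|\le 1$, $|I'\cap R_2|\le i$, $|I'\cap R_2'|\le 1$, which makes your informal ``$c$ half-committed rows plus $\le 3$ extras'' precise; your anticipated bookkeeping at a cut inside a single $(C_r,X_{\mathbf i})$-group is exactly what these claims handle.
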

\begin{proof}
Recall that $n'=m\cdot c\cdot 2^{\ell+1}$, be the number of columns in $A$ and 
$m'$ be the number of rows in $A$.
To prove that the \pw of $A$ is at most $c+4$, it is sufficient to show that 
 for all $j\in [n'-1]$, 
 \begin{equation}
 \operatorname{dim}\langle  \operatorname{span}(A|\{1,\dots, j\})\cap\operatorname{span}(A|\{j+1, \dots, n'\}) \rangle \leq c+3. \label{eqn:dim:intersection}
 \end{equation}
 
The idea for proving Equation~\eqref{eqn:dim:intersection} is based on the following observation.  
For $V'=A|\{1,\dots, j\}$ and $V''=A|\{j+1, \dots, n'\}$, let 
$$I=\{\q\in [m']~|~\mbox{there exists } v'\in V'  \text{ and } v''\in V'' \text{ such that }   v'[\q]\neq v''[\q] \neq 0\}.$$ 
Then the  dimension of  $\operatorname{span}(V')\cap \operatorname{span} (V'')$ is at most $\vert I \vert$. 
Thus to prove  \eqref{eqn:dim:intersection}, for each $j\in [n'-1]$, we construct the corresponding set $I$ and show that 
its cardinality is at most $c+3$. 

We proceed with the details. 
Let $v_1,v_2,\ldots,v_{n'}$ be the column vectors of $A$. 
Let $j\in [n'-1]$. 
Let   $V_1=\{v_1,\ldots,v_j\}$ and 
$V_2=\{v_{j+1},\ldots,v_{n'}\}$. We need to show that 
$\operatorname{dim}\langle  \operatorname{span}(V_1)\cap\operatorname{span}(V_2) \rangle \leq c+3$.
Let $$I'=\{\q \in [m']~|~\mbox{there exists $v\in V_1$ and $v'\in V_2$ such that   $v[\q]\neq 0\neq v'[\q]$}\}.$$ 
We know that $[n']$ is partitioned into parts $P_{r',\ii'}, r'\in [m], \ii'\in \ZZ{c}$. 

\begin{center}
\textbf{Fix $r\in [m]$ and $\ii\in \ZZ{c}$ such that $j \in P_{r,\ii}$}.
\end{center}
 Let 
$j=(r-1)c\cdot 2^{\ell+1}+\ii\cdot 2^{\ell+1}+\bb$, where $\bb \in [2^{\ell+1}]$.
Let $\q_1=\max\{0, (r-1)(2c+1)-c\}$,  $\q_2=r(2c+1)$,  $j_1=(r-1)\cdot c\cdot 2^{\ell+1}$, and $j_2=r\cdot c\cdot 2^{\ell+1}$
Then 
$[\q_2]\setminus [\q_1]=I_r$ and $[j_2]\setminus [j_1]=J_r$ (recall the definition of sets $I_r$ and $J_r$). 

By the decomposition of matrix  $A$, for every $\q>\q_2$ and for every vector $v\in V_1$, we have 
$v[q]=0$.  Also, for every $\q\leq \q_1$ and for any $v\in V_2$, we have that 
$v[\q]=0$. This implies that  $I' \subseteq [\q_2]\setminus [\q_1]=I_r$.  
Now we partition $I_r$ into $4$ parts: $R_1,R, S$, and $R_2$, These parts are defined as follows.  
\begin{eqnarray}
R_1&=& 
\left\{ \begin{array}{ll}
\emptyset, & \mbox{if } r=1, \\
\{(r-2)(2c+1)+\ip~|~\ip\in \ZZ{c}\}, & \mbox{otherwise,}
\end{array}\right. \nonumber \\
R&=&\{(r-1)(2c+1)+1\},  \label{Eqn:R}\\
S&=& 
\{(r-1)(2c+1)+2+ \ip ~|~\ip\in \ZZ{c}]\}
 \nonumber \\
R_2&=& 
\left\{ \begin{array}{ll}
\emptyset, & \mbox{if } r=m, \\
\{(r-1)(2c+1)+c+2+ \ip ~|~\ip \in \ZZ{c}\} ,& \mbox{otherwise } 
\end{array}\right. \nonumber 
\end{eqnarray}

\begin{claim}
\label{claim:outofwindow0}
For each $r'\in [m], \q\notin I_{r'}$ and $j''\in J_{r'}, v_{j''}[\q]=0$. 
\end{claim}
\begin{proof}
The non-zero entries in $A$ are covered by the disjoint sub-matrices 
$A[I_{r'},J_{r'}]=B_{r'}, {r'}\in [m]$. Hence the claim follows.  
\end{proof}

\begin{claim}
$\vert I'\cap R_1 \vert \leq c-(\ii-1)$.
\label{claim:R1}
\end{claim}
\begin{proof}
When $r=1$, $R_1=\emptyset$ and the claim trivially follows. 
Let $r>1$, and let  $\q \in R_1$ be such that $\q <(r-2)(2c+1)+\ii$. Then   $q=(r-2)(2c+1)+1+\ip$ for some 
  $0 \leq  \ip< \ii$. 
Notice that $\q \notin I_{r'}$ for every $r'>r$. By Claim~\ref{claim:outofwindow0},  for every $v\in \bigcup_{r'>r} J_{r'}$, $v[\q]=0$. 
Now consider the vector $v_{j''}\in V_2 \setminus  (\bigcup_{r'>r} J_r')$. 
Notice that $j''>j$ and $j''\in J_r$.  Let $j''=j+a=(r-1)c\cdot 2^{\ell+1}+\ii \cdot 2^{\ell+1}+g+a$ 
for some $a\in [rc2^{\ell+1}-j]$.  
 From the decomposition of $A$, 
$v_{j''}[\q]=B_r[\ip+1, \ii \cdot 2^{\ell+1}+\bb+a]= 0$, by (\ref{eqn:Br:zero1}).  Thus
for every $\q \in R$, $\q<(r-2)(2c+1)+\ii$ and $v\in V_2$, $v[\q]=0$.

This implies that 
$$\vert I'\cap R_1\vert \leq \vert \{\q \geq (r-2)(2c+1)+\ii\}\cap R_1 \vert \leq c-(\ii-1).\qedhere$$
\end{proof}



\begin{claim}
$\vert I'\cap R_2\vert \leq \ii$. 
\label{claim:R2}
\end{claim}

\begin{proof}
When $r=m$, $R_2=\emptyset$ and the claim trivially holds. 
So, now let $r<m$ and consider any $\q \in R_2 \cap \{ \q' >(r-1)(2c+1)+c+2+\ii\}$.  
Let $\ip>\ii$ such that $\q=(r-1)(2c+1)+c+2+\ip$. 
Notice that $\q\notin I_{r'}$ for any $r'<r$. Hence, by Claim~\ref{claim:outofwindow0},  for any $v\in \bigcup_{r'<r} J_{r'}$, $v[\q]=0$. 
Now consider any vector $v_{j''}\in V_1 \setminus  (\bigcup_{r'<r} J_r')$. 
Notice that $j''\leq j$ and $j''\in J_r$.  Let $j''=(r-1)c\cdot 2^{\ell+1}+i'' \cdot 2^{\ell+1}+a$  
for some $a\in [2^{\ell+1}]$ and $i''\leq i < \ip$.  
From the decomposition of $A$, 
$v_{j''}[\q]=B_r[2c+2+\ip, i'' \cdot 2^{\ell+1}+a]= 0$, by 
(\ref{eqn:Br:zero2}).  Hence we have shown that 
for any $\q\in R$, $\q>(r-2)(2c+1)+c+2+\ii$ and $v\in V_1$, $v[\q]=0$. This implies that 
$$\vert I'\cap R_2\vert \leq \vert \{\q \leq (r-1)(2c+1)+c+2+\ii\}\cap R_1 \vert \leq \ii.\qedhere$$
\end{proof}
\begin{claim}
$\vert I'\cap S \vert \leq 1$. 
\label{claim:R2'}
\end{claim}
\begin{proof}
%

Consider any $\q \in S$.  
Let $\ip \in \ZZ{c}$ such that $\q=(r-1)(2c+1)+2+\ip$. 
Notice that $\q \notin I_{r'}$ for any $r'<r$, and hence, by Claim~\ref{claim:outofwindow0},  for any $v\in \bigcup_{r'<r} J_{r'}$, $v[\q]=0$. Also notice that $\q \notin I_{r'}$ for any $r'>r$, 
and hence, by Claim~\ref{claim:outofwindow0},  for any $v\in \bigcup_{r'>r+1} J_{r'}$, $v[\q]=0$. 
So the only potential $j''$ for which $v_{j''}[\q]\neq 0$,  are from $J_r$. 

We claim that if  
$\q \in I' \cap S$, then $\q =(r-1)(2c+1)+2+\ii$. 
Suppose $\q \in I' \cap S$ and  $\q <(r-1)(2c+1)+2+\ii$. 
Let  $\q=(r-1)(2c+1)+2+\ip$, where $0\leq \ip <\ii$. 
Then by the decomposition of $A$, for any $j''>j$, 
$v_{j''}[\q]=B_r[c+2+\ip,j''-(r-1)c2^{\ell+1}]=B_r[c+2+\ip, i_1 2^{\ell+1}+a]$, where $c-1\geq i_1\geq \ii$ and $a\in [2^{\ell+1}]$. 
Thus by (\ref{eqn:Br:zeroX}),   $v_{j''}[\q]=B_r[c+2+\ip,i_1 2^{\ell+1}+a]=0$. 
This 
contradicts the assumption that $\q \in I' \cap S$. 

Suppose $\q \in I' \cap S$ and  $\q >(r-1)(2c+1)+c+2+\ii$. 
Let  $\q=(r-1)(2c+1)+c+2+\ip$, where $\ii<\ip<c$. 
Then by the decomposition of $A$, for any $j''\leq j$, 
$v_{j''}[\q]=B_r[c+2+\ip,j''-(r-1)c 2^{\ell+1}]=B_r[c+2+\ip, i_1 2^{\ell+1}+a]$, where $0\leq i_1\leq \ii$, $a\in [2^{\ell+1}]$.  
Thus by (\ref{eqn:Br:zeroX}),  $v_{j''}[\q]=B_r[c+2+\ip, i_1 2^{\ell+1}+a]=0$. This 
contradicts the assumption that $i\in I' \cap S$. 
This implies that  $\vert I'\cap S \vert \leq 1$.
This completes the proof of the claim. 
\end{proof}
Therefore, we have 
\begin{eqnarray*}
\vert I' \vert &=& \vert I' \cap I_r \vert 
\qquad\qquad\qquad\qquad\qquad\qquad\qquad\qquad\qquad\qquad\qquad (\mbox{Because } I'\subseteq I_r)\\
&=&\vert I' \cap R_1 \vert  + \vert I' \cap R \vert + \vert I' \cap S \vert + \vert I' \cap R_2 \vert \qquad \qquad\qquad (\mbox{By ~\eqref{Eqn:R}})\\
&\leq& c-(\ii-1)+1+1+\ii    \qquad \qquad\qquad \qquad(\mbox{By Claims~\ref{claim:R1},\ref{claim:R2} and \ref{claim:R2'}})\\ 
&=&c+3 
\end{eqnarray*}
This completes the proof of the lemma.
\end{proof}
%
\begin{proof}[Proof of Theorem~\ref{thm:lowbranchwidth}.] 
We prove the theorem by assuming a fast algorithm for 
\IP and use it to give a fast algorithm for  {\sc CNF-SAT}, refuting SETH. 
Let $\psi$ be an instance of {\sc CNF-SAT} 
with $n_1$ variables and $m_1$ clauses.  We choose a sufficiently large constant $c$ such that 
$(1-\epsilon)+\frac{4(1-\epsilon)}{c}+\frac{a}{c}<1$ holds. 
We use the reduction mentioned in Lemma~\ref{lemtechnicalintro} and construct an instance  $\cm x= \tv, x\geq 0$, of \IP which has a solution if and only if $\psi$ is satisfiable. 
The reduction takes time $\OO(m_1^2 2^{\frac{n_1}{c}})$.  
Let $\ell=\lceil \frac{n_1}{c}\rceil$.  
The constraint matrix $\cm$ has dimension  $((m_1-1)(2c+1)+1+c)\times (m_1\cdot c\cdot 2^{\ell+1})$ 
and the largest entry in   vector $\tv$ does not exceed $2^{\ell}-1$. The \pw of $M(\cm)$ is at most $c+4$.

Assuming that any instance of \IP with non-negative constraint matrix of \pw $k$ is solvable in time $f(k)(\valueB+1)^{(1-\epsilon)k}(mn)^a$, 
where $d$ is the maximum value in an entry of $b$ and $\epsilon,a>0$ are constants,  we have that 
$\cm x= \tv, x\geq 0$,  is solvable in time 
\[
f(c+4) \cdot 2^{\ell \cdot (1-\epsilon)(c+4)} \cdot2^{\ell\cdot a} \cdot m_1^{\OO(1)}= 2^{\frac{n_1}{c}(1-\epsilon)(c+4)} \cdot
2^{\frac{n_1\cdot a}{c}} \cdot m_1^{\OO(1)} = 2^{n_1\left( (1-\epsilon)+\frac{4(1-\epsilon)}{c} +\frac{a}{c} \right)} \cdot  m_1^{\OO(1)}.
\]
Here the constant $f(c+4)$ is subsumed by the term $m_1^{\OO(1)}$. 
Hence the total running time for testing whether $\psi$ is satisfiable or not, is, 
\[
\OO(m_1^2 2^{\frac{n_1}{c}})+2^{n_1\left( (1-\epsilon)+\frac{4(1-\epsilon)}{c} +\frac{a}{c} \right)} m_1^{\OO(1)} 
= 2^{n_1\left( (1-\epsilon)+\frac{4(1-\epsilon)}{c} +\frac{a}{c} \right)} m_1^{\OO(1)}= 2^{\epsilon'\cdot n_1} m_1^{\OO(1)},  
\]
where $\epsilon'=(1-\epsilon)+\frac{4(1-\epsilon)}{c}+\frac{a}{c}<1$. 
This completes the proof of Theorem~\ref{thm:lowbranchwidth}. 
%
%
\end{proof}

\subsection{Proof Sketch of Theorem~\ref{thm:lowentries}}\label{sec:lowentries}

In this section we prove  Theorem~\ref{thm:lowentries}:  \IP{} with non-negative matrix $A$ cannot be solved in time $f(\valueB)(\valueB+1)^{(1-\epsilon)k}(mn)^{\OO(1)}$ for any function 
$f$ and $\epsilon>0$, unless SETH fails,  where 
$k$ is the \pw of the column matroid of $A$.

In Section~\ref{sec:lowbranchwidth}, we gave a reduction from {\sc CNF-SAT} 
to \IP. However in this reduction the values in the constraint matrix  $\cm$
and target  vector $\tv$ can be as  large as  $2^{\lceil \frac{n}{c}\rceil}-1$, 
where $n$ is the number of variables in the {\sc CNF}-formula $\psi$ and $c$ is 
a constant. Let $m$ be the number of clauses in $\psi$. In this section 
we briefly explain how to get rid of these large values, at the cost 
of making large, but still bounded \pw.   
From a {\sc CNF}-formula $\psi$, we construct a matrix $A=\cm$ 
as described in Section~\ref{sec:lowbranchwidth}. 
The only rows 
in $A$ which contain values strictly greater than $1$ (values other than $0$ or $1$)
are the ones corresponding to the constraints defined by Equation~(\ref{eqn:consistancy}). 
 %
In other words,  the values greater than $1$ are in the rows in 
yellow/green colored portion in Figure~\ref{fig:A}. 
Recall that $\ell=\lceil \frac{n}{c}\rceil$ and the largest value in $A$  is $2^{\ell}-1$.
Any number less than  or equal to $2^{\ell}-1$ can be represented by a
binary string of length $\ell=\frac{n}{c}$. 
Now we rewrite the Equation~(\ref{eqn:consistancy}), by 
$\ell$ new equations. For each $j\in [\ell]$ and $N\in{\mathbb N}$, let $b_j(N)$, represents the $j^{th}$ bit 
in the $\ell$-bit binary representation of $N$. Then for each for all $r\in [m-1]$, $i\in\ZZ{c}$ and $j\in [\ell]$, we have a system of constraints   
\begin{eqnarray}
\sum_{\substack{a \in \ZZ{2L} }} \left(b_j\left(\lfloor \frac{a}{2}\rfloor\right)\cdot y_{C_r,i,a}\right)   + \left(b_j(L-1-\lfloor \frac{a}{2}\rfloor)\cdot y_{C_{r+1},i,a}\right) &=&1 
\end{eqnarray}

In other words, let $P=\{(r-1)(2c+1)+c+1+\ii~|~r\in[m-2], \ii\in \ZZ{c}\}$.
The rows of $A$ containing values larger than one are indexed by $P$. 
Now we construct a new matrix 
$A'$ from $A$ by replacing each row of $A$ whose index is in the set $P$ with $\ell$ rows and for any value 
$A[i,j], i\in P$ we write its $\ell$-bit binary representation in the column corresponding 
to $j$ and the newly added $\ell$ rows of $A'$. 
That is,  for any $\gamma\in P$, we replace the row $\gamma$ with $\ell$ rows,  $\gamma_{1},\ldots,\gamma_{\ell}$. Where, for any 
$j$, if the value $A[\gamma,j]=W$ then $A'[\gamma_{k},j]=\eta_k$, where $\eta_k$ is the $k^{th}$ bit in the 
$\ell$-sized binary representation of $W$.  

Let $m'$ be the number of rows in 
$A'$. Now the target vector $b'$ is defined as $b'[i]=1$ for all $i\in [m']$. 
This completes the construction of the reduced \IP{} instance $A'x=b'$.  
The correctness proof of this reduction is using arguments similar to those used for the correctness of  
Lemma~\ref{lemma:brachwidthcorrect}.    

\begin{lemma} 
\label{lemma:pathwidthvaluesmall}
The \pw of the column matroid of $A'$ is at most $(c+1)\frac{n}{c}+3$.
\end{lemma}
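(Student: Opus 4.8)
The plan is to leverage the bound on the path-width of $M(A)$ obtained in Lemma~\ref{lemma:pathwidthbounded}, namely that for every $j \in [n'-1]$ the set $I'$ of "shared coordinates" between the span of the first $j$ columns and the span of the remaining columns has size at most $c+3$. The key observation is that the construction of $A'$ does not touch the columns at all — it only splits certain rows of $A$ into $\ell = \lceil n/c \rceil$ rows each. So the column ordering of $A'$ is inherited directly from $A$, and I would analyze the same cut $j \in [n'-1]$ in $A'$ as was analyzed in $A$.

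First I would fix $j$ and let $I'$ be the corresponding set of "bad" rows in $A$, so $|I'| \le c+3$ by the proof of Lemma~\ref{lemma:pathwidthbounded}. Using the observation in that proof that $\operatorname{dim}\langle\operatorname{span}(V_1)\cap\operatorname{span}(V_2)\rangle \le |I'|$ where $V_1, V_2$ are the two parts of the column set, I would argue analogously for $A'$: a row $\gamma$ of $A$ not in $P$ survives unchanged in $A'$ and contributes at most the "same" way it did in $A$, whereas a row $\gamma \in P$ that was split into $\gamma_1,\ldots,\gamma_\ell$ in $A'$ can contribute at most $\ell$ coordinates to the shared-coordinate set of $A'$ instead of just one. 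Concretely, if $\widehat{I'}$ denotes the shared-coordinate set for the cut $j$ in $A'$, then each element of $I'$ that lies in $P$ is replaced by at most $\ell$ elements of $\widehat{I'}$, and each element of $I'$ not in $P$ contributes at most one element. Hence $|\widehat{I'}| \le \ell \cdot |I' \cap P| + |I' \setminus P| \le \ell(c+3)$ in the crudest bound; to get the claimed $(c+1)\frac{n}{c}+3$ one must be more careful about how many elements of $I'$ actually lie in $P$.

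The refinement I would carry out: recall from the five-part partition $R_1, R_1', R, R_2, R_2'$ in the proof of Lemma~\ref{lemma:pathwidthbounded} that the rows in $P$ — the alternate rows of the predecessor/successor matching parts that carry the values $j$ and $L-1-j$ — are exactly the rows counted by $I' \cap R_1$ (size $\le c-(\ii-1)$) and $I' \cap R_2$ (size $\le \ii$), while the single row in $R$ (the evaluation row, always $0/1$), the rows in $R_1', R_2'$ (the "selection" rows, always $0/1$), and the last $c$ rows of $R_2'$ when $r=m$ are all $0/1$-valued and therefore not in $P$, so they are not split. Thus $|I' \cap P| \le (c-(\ii-1)) + \ii = c+1$ and $|I' \setminus P| \le |I'| - |I' \cap P|$; since $|I'| \le c+3$ and the non-$P$ part contributes at least the evaluation row plus one selection row, I would conclude $|\widehat{I'}| \le \ell(c+1) + (|I'| - |I'\cap P|) \le \ell(c+1) + 3$, where the $+3$ absorbs the unsplit rows (one from $R$, at most one from $R_1'$ region which is actually empty by Claim~\ref{claim:R1'}, and one from $R_2'$). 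Plugging $\ell = n/c$ gives the bound $(c+1)\frac{n}{c}+3$, and then the path-width is at most that plus $1$ — so one should double-check whether the lemma's "$+3$" already includes the "$+1$ from dimension to path-width" conversion, i.e. whether it should be $(c+1)\frac{n}{c}+2$ inside the dimension bound; I would reconcile this against the exact statement of Lemma~\ref{lemma:pathwidthbounded}.

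The main obstacle will be the bookkeeping in the last step: making precise the claim that when a single row $\gamma \in P$ of $A$ is "shared" between $V_1$ and $V_2$ (i.e. some $v' \in V_1$ and $v'' \in V_2$ have $v'[\gamma] \ne v''[\gamma] \ne 0$), the $\ell$ replacement rows $\gamma_1,\ldots,\gamma_\ell$ of $A'$ contribute at most $\ell$ to $\widehat{I'}$ — and, conversely, that unsplit $0/1$ rows genuinely contribute at most what they did before. This is routine but requires carefully re-running the case analysis of Claims~\ref{claim:R1}--\ref{claim:R2'} with the bit-rows in place of the integer rows, checking that the same zero-pattern arguments (based on equations \eqref{eqn:Br:zero1}, \eqref{eqn:Br:zero2}, \eqref{eqn:Br:zero3} and their $B_1, B_m$ analogues) still force the off-window entries to vanish after the bitwise expansion, which they do since splitting a zero entry yields an all-zero block of bits.
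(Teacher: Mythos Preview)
Your proposal is correct and follows essentially the same approach as the paper: inherit the column ordering from $A$, observe that only the rows in $P$ (those coming from $R_1$ and $R_2$ in the five-part partition) are split into $\ell$ bit-rows while the rows from $R_1'$, $R$, $R_2'$ remain unsplit, and then re-use the zero-pattern arguments from Claims~\ref{claim:R1}--\ref{claim:R2'} to bound the shared-coordinate set by $(c+1)\ell + 2$, giving path-width at most $(c+1)\frac{n}{c}+3$. Your hesitation about the ``$+2$ versus $+3$'' is exactly right: the unsplit contribution is at most $|I'\cap R| + |I'\cap R_1'| + |I'\cap R_2'| \le 1+0+1 = 2$, and the final $+1$ comes from the dimension-to-path-width conversion --- this is precisely what the paper does.
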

\begin{proof}
We sketch the proof, which is similar to the proof of Lemma~\ref{lemma:pathwidthbounded}. 
We  define $I'_r$ and $J'_r$ for any $r\in [m]$ like $I_r$ and $J_r$  in Section~\ref{sec:lowbranchwidth}. 
In fact, the rows in $I'_r$ are the rows obtained from $I_r$ in the process explained above to construct $A'$ from $A$. 
We need to show that 
 $\operatorname{dim}\langle  \operatorname{span}(A'|\{1,\dots, j\})\cap\operatorname{span}(A'|\{j+1, \dots, n'\}) \rangle \leq (c+1)\frac{n}{c}+2$ for all $j\in [n'-1]$, where $n'$ is the number of columns in $A'$.
The proof proceeds by bounding the number of indices $I$ 
such that for any $\q \in I$ there exist vectors    $v\in A'|\{1,\dots, j\}$  and $u\in A'|\{j+1, \dots, n'\}$ with  $v[\q]\neq 0 \neq u[\q]$. 
By arguments similar to the ones used in the proof of Lemma~\ref{lemma:pathwidthbounded}, 
we can show that for any $j\in [n'-1]$, the corresponding set $I'$ of indices is a subset of $I'_r$ for some $r\in [m]$.  
Recall the partition of $I_r$ into $R_1,R,S$ and $R_2$ in Lemma~\ref{lemma:pathwidthbounded}.  
We partition 
$I'_r$ into parts $Q_1,W,U$ and $Q_2$. 
Notice that $R_1,R_2\subseteq P$, where $P$ is the set of rows which covers all 
 values strictly greater than $1$.   
The set $Q_1$ and $Q_2$ are obtained from $R_1$ and $R_2$, respectively, by the process 
mentioned above to construct $A'$ from $A$. 
That is, each row in $R_i, i\in \{1,2\}$ is replaced by $\ell$ rows in $Q_i$. Rows in 
$W$ corresponds to  rows in $R$ and $U$ corresponds to the rows in $W$. 
This allows us to bound the following terms for some $\ii \in \ZZ{c}$:
\begin{eqnarray*}  
\vert I'\cap Q_1\vert &\leq& (c-(\ii -1))\ell=(c-(\ii-1))\ell,\\
\vert I'\cap Q_2\vert &\leq& \ii \cdot \ell, \\
\vert I'\cap U\vert &\leq& 1, \text{ and}\\
\vert I'\cap W\vert &\leq& 1.
\end{eqnarray*}
By using the fact that $I'\subseteq I'_r$ and the above system of inequalities, 
we can show  that 
$$\operatorname{dim}\langle  \operatorname{span}(A'|\{1,\dots, j\})\cap\operatorname{span}(A'|\{j+1, \dots, n'\}) \rangle \leq (c+1)\lceil\frac{n}{c}\rceil+2.$$ 
This completes the proof sketch of the lemma.    
\end{proof}
Now the proof of the theorem follows from Lemma~\ref{lemma:pathwidthvaluesmall} and the correctness 
of the reduction (it is similar to the arguments in the proof of Theorem~\ref{thm:lowbranchwidth}). 
 \section{Proof of Theorem~\ref{thmCGlin}}\label{sec:thmCGlin}
In this section, we sketch how the proof  of Cunningham and Geelen \cite{CunninghamG07}
of  Theorem~\ref{thmCG}, can be adapted to prove  Theorem~\ref{thmCGlin}. 
Recall that a path decomposition of width $k$ can be 
obtained  in $f(k)\cdot n^{\OO(1)}$ time for some function $f$ by making use of the algorithm   by Jeong et al.~\cite{Jeong0O16}. However, we do not know if such a path decomposition can be constructed in  time  $\OO ((\valueB+1)^{k+1}) n^{\OO(1)}$, so the assumption that a path decomposition is given is essential.

Roughly speaking, the only difference in the proof is that when parameterized by the branch-width, the most time-consuming operation is the ``merge" operation, when we have to construct a new set of partial solutions with at most 
$(\valueB+1)^k$ vectors from two already computed sets of sizes $(\valueB+1)^k$ each. Thus to construct a new set of vectors, one has to go through all possible pairs of vectors from both sets, which takes time roughly $(\valueB+1)^{2k}$. 
 For \pw parameterization,   the new partial solution set  is constructed  from two sets, but this time one set contains at most  $(\valueB+1)^k$ vectors while the second contains at most  $\valueB+1$ vectors. This allows us to construct the new set in time roughly $(\valueB+1)^{k+1}$.
 
 \medskip

Recall that for $X\subseteq [n]$, we define 
$S(A,X)=\operatorname{span}(A|X)\cap\operatorname{span}(A|E \setminus X)$, 
where $E=[n]$.
The key lemma in the proof of Theorem~\ref{thmCG} is the following. 
\begin{lemma}[\cite{CunninghamG07}]
\label{lemma:CGboundedpartialsoln}
Let $A\in \{0,1,\ldots,\valueB\}^{m\times n}$ and $X\subseteq [n]$ such that $\lambda_{M(A)}(X)=k$. 
Then the number of vectors in  $S(A,X)\cap \{0,\ldots,\valueB\}^m$ is at most $(\valueB+1)^{k-1}$. 
\end{lemma}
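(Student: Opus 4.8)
The plan is to exploit the structural fact recorded in the Preliminaries, namely that $S(A,X)=\operatorname{span}(A|X)\cap\operatorname{span}(A|E\setminus X)$ is a linear subspace of $\mathbb{R}^m$ whose dimension equals $\lambda_{M(A)}(X)-1=k-1$, and then to bound the number of lattice points that \emph{any} $(k-1)$-dimensional subspace can contain inside the box $\{0,1,\ldots,d\}^m$. Thus the lemma reduces to a purely linear-algebraic counting statement that does not use non-negativity of $A$ at all.

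First I would establish the key observation: there is a set $S\subseteq[m]$ with $|S|=k-1$ such that the coordinate projection $\pi_S\colon\mathbb{R}^m\to\mathbb{R}^S$ restricts to an \emph{injective} linear map on $S(A,X)$. To see this, fix a basis $w_1,\ldots,w_{k-1}$ of $S(A,X)$ and let $M$ be the $(k-1)\times m$ matrix whose rows are $w_1,\ldots,w_{k-1}$. Since $\operatorname{rank}(M)=k-1$, the matrix $M$ has $k-1$ linearly independent columns; let $S$ be the set of their indices, so that the submatrix $M[\,\cdot\,,S]$ is an invertible $(k-1)\times(k-1)$ matrix. For any $v=\sum_{i}y_i w_i\in S(A,X)$ we have $\pi_S(v)=y^{\top}M[\,\cdot\,,S]$, and invertibility of $M[\,\cdot\,,S]$ forces $y=0$ whenever $\pi_S(v)=0$; hence $\ker\big(\pi_S|_{S(A,X)}\big)=\{0\}$, as claimed. (When $k=1$ we have $S(A,X)=\{0\}$ and the statement is trivial, with the bound $1=(d+1)^0$.)

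The counting is then immediate. Every vector $v\in S(A,X)\cap\{0,\ldots,d\}^m$ is uniquely determined by its image $\pi_S(v)$, because $\pi_S$ is injective on $S(A,X)$; and $\pi_S(v)\in\{0,\ldots,d\}^{S}$, a set of cardinality $(d+1)^{|S|}=(d+1)^{k-1}$. Therefore $|S(A,X)\cap\{0,\ldots,d\}^m|\le (d+1)^{k-1}$, which is the desired bound.

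The only point requiring any care — and it is not new work — is the justification that $\dim S(A,X)=\lambda_{M(A)}(X)-1$: this follows from $\dim\operatorname{span}(A|X)=r_{M(A)}(X)$, $\dim\operatorname{span}(A|E\setminus X)=r_{M(A)}(E\setminus X)$, $\dim\big(\operatorname{span}(A|X)+\operatorname{span}(A|E\setminus X)\big)=r_{M(A)}([n])$, together with the identity $\dim(U\cap V)=\dim U+\dim V-\dim(U+V)$ and the definition $\lambda_M(X)=r_M(X)+r_M(U\setminus X)-r_M(U)+1$. Everything else is elementary, so I expect no real obstacle; the content of the argument is entirely in the dimension bound on $S(A,X)$, which is exactly the interpretation of matroid connectivity given in the Preliminaries.
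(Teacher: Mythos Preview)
Your argument is correct. Note, however, that the paper does not actually prove this lemma: it is quoted verbatim from Cunningham and Geelen~\cite{CunninghamG07} and used as a black box in Section~\ref{sec:thmCGlin}. Your proof --- reduce to the fact that $\dim S(A,X)=k-1$, pick $k-1$ coordinates on which the projection is injective, and bound by the size of the projected box --- is exactly the natural (and standard) way to establish the statement, and is in spirit the argument in the original source.
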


To prove  Theorem~\ref{thmCGlin},  without loss of generality, we  
assume that the columns of $A$ are ordered in such a way that 
for every $j\in [n-1]$,  
$$\operatorname{dim}\langle  \operatorname{span}(A|\{1,\dots, i\})\cap\operatorname{span}(A|\{i+1, \dots, n\}) \rangle \leq k-1.$$ 

Let $A'=[A,b]$. That is $A'$ is obtained by appending the column-vector $b$ to  the end of $A$.  
Then for each $i\in [n]$, 
\begin{equation}\label{eq_dim}
\operatorname{dim}\langle  \operatorname{span}(A'|\{1,\dots, i\})\cap\operatorname{span}(A'|\{i+1, \dots, n+1\}) \rangle \leq k.
\end{equation}
Now we use  dynamic programming to check  whether the following conditions are satisfied. 
For  $X \subseteq [n+1]$, let $\BB(X)$ be  the set of all vectors $b'\in \Z^m$ such that 
\begin{itemize}
\item[(1)] $0 \leq b' \leq b$, 
\item[(2)] there exists $z\in \Z^{\vert X \vert}$  such that $(A'|X)z=b'$, and 
\item[(3)] $b'\in  S(A', X)$.
\end{itemize}
Then \IP has a solution if and only if $b\in \BB([n])$. Initially the algorithm computes for all $i\in [n]$, $\BB(\{i\})$ and by Lemma~\ref{lemma:CGboundedpartialsoln}, 
we have that  $\vert \BB(\{i\})\vert \leq \valueB+1$. In fact $\BB(\{i\})\subseteq \{a\cdot v~|~v \mbox{ is the $i^{th}$ column vector of  $A'$ and }a\in [\valueB+1] \}$. Then for each $j\in [2,\ldots n]$ the algorithm computes $\BB([j])$ in   increasing order of $j$ 
and outputs {\sc Yes} if and only if  $b\in \BB([n])$. 
That is, $\BB([j])$ is computed from the  already computed sets $\BB([j-1])$ and $\BB(\{j\})$. 
Notice that $b'\in \BB([j])$ if and only if 
\begin{itemize}
\item[(a)] there exist $b_1\in \BB(\{1,\ldots,j-1\})$ and $b_2\in \BB(\{j\})$ such that $b'=b_1+b_2$,
\item[(b)] $b'\leq b$ and 
\item[(c)] $b'\in S(A',[j])$. 
\end{itemize}  
 So the algorithm enumerates vectors $b'$ satisfying condition $(a)$, and each such vector $b'$ is included in $\BB([j])$,  
if  $b'$ satisfy conditions $(b)$ and $(c)$. Since by \eqref{eq_dim} and Lemma~\ref{lemma:CGboundedpartialsoln},  $\vert \BB([j-1])\vert \leq (\valueB+1)^k$ and $\vert \BB(\{j\})\vert\leq \valueB+1$, 
the number of vectors satisfying condition $(a)$ is $(\valueB+1)^{k}$, and hence the exponential factor of the 
required running time  follows.
This provides the bound on the claimed exponential dependence in the running time of the algorithm. The bound on the polynomial component of the running time follows from exactly the same arguments as in~\cite{CunninghamG07}.

\section{Conclusion}\label{sec:concl}

%
%

In a previous version of this paper on ArXiv~\cite{FominPRS16}  we pointed out that it was unknown whether the  algorithm of  Papadimitriou~\cite{Papadimitriou81}  is asymptotically optimal. This question has now been answered by  Eisenbrand and Weismantel in~\cite{EisenbrandW18}, who gave an algorithm solving  
\IP with  an  $m\times n$ matrix $A$  in time $(m\cdot \valueA)^{\OO(m)}\cdot \valueB^2$, where $\valueA$ is the upper bound on the absolute values of the entries of $A$. 
While 
Theorems~\ref{thm:ETHIP} and ~\ref{thm:ETHIP2} come close to this bound,
 the precise multivariate complexity of \IP with respect to the parameters $n$, $m$, $\valueA$,  and $\valueB$ is not fully clear and our work leaves   some unanswered questions regarding the landscape of   tradeoffs between the parameters. For instance, 
is it possible to solve  \IP in time

\begin{itemize}
\item $ (m\cdot n\cdot \valueA)^{o(m)}\cdot (\valueB)^{\OO(1)}$, or 
\item $ (m\cdot n\cdot \valueA\cdot \valueB)^{o(m)} $?
\end{itemize} 

%

\medskip
Or could one improve our lower bound results to rule out such algorithms? 
While our SETH-based lower bounds 
for \IP with non-negative constraint matrix are tight for \pw parameterization, there is a ``$(\valueB+1)^k$ to $(\valueB+1)^{2k}$ gap'' between lower and upper bounds for branch-width parameterization.  
Closing this gap is the first natural question. 




The proof of Theorem~\ref{thmCG} given by  Cunningham and Geelen consists of two parts.  
 The first part bounds the number of potential partial solutions 
corresponding to any edge of the branch decomposition tree by 
$(\valueB+1)^k$.  The second part is the dynamic programming 
over the branch decomposition using the fact that  the number of 
potential partial solutions is bounded. The bottleneck in  the algorithm of Cunningham and Geelen   is the following subproblem. We are given two vector sets $A$ and $B$ of partial solutions, each set of size at most $(\valueB+1)^k$.  We need to construct a new vector set $C$ of partial solutions, where the set $C$ will have size  at most 
$(\valueB+1)^k$ and each vector from $C$ is the \emph{sum} of a vector from $A$ and a vector from $B$. 
 Thus to construct the new set of vectors, one has to go through all possible pairs of vectors from both sets $A$ and $B$, which takes time roughly $(\valueB+1)^{2k}$.  

A tempting approach towards improving the running time of this particular step could be the use of {\em fast subset convolution} 
or {\em matrix multiplication} tricks, which work very well for ``join'' operations in dynamic programming algorithms over tree and branch decompositions 
of graphs~\cite{Dorn06,RBR09,cut-and-count}, see also \cite[Chapter~11]{cygan2015parameterized}.  
Unfortunately, we have reason to suspect that these tricks may \emph{not} help for matrices:  solving the above subproblem in time $(\valueB+1)^{(1-\epsilon)2k}n^{\OO(1)}$ for any $\epsilon>0$ 
would imply that $3$-SUM is  solvable in time $n^{2-\epsilon}$, which is believed to be unlikely.
(The $3$-SUM problem asks whether a given set of $n$ integers contains three elements that sum to zero.) 
Indeed, consider an equivalent version of $3$-SUM,  named $3$-SUM$^\prime$,   which  is defined as follows. 
Given $3$ sets of integers $A,B$ and $C$ each of cardinality $n$, and the objective is to check whether there exist $a\in A$, $b\in B$ and $c\in C$ 
such that $a+b=c$. Then,  $3$-SUM is  solvable in time $n^{2-\epsilon}$ if and only if $3$-SUM$^\prime$ is as well (see Theorem~$3.1$ in~\cite{GajentaanO95}).
However, the problem $3$-SUM$^\prime$  is equivalent to the most time consuming step in the algorithm of Theorem~\ref{thmCG}, 
where the integers in the input of $3$-SUM$^\prime$ can be thought of as length-one vectors. 
While this observation does not \emph{rule out} the existence of an algorithm solving 
(IP) with constraint matrices of branch-width $k$ in time
$(\valueB+1)^{(1-\epsilon)2k}n^{\OO(1)}$, it indicates that any interesting improvement in the running time would  require a completely different approach.


  \newpage
 \bibliography{book_pc.bib}

\end{document}